\newcommand{\norm}[1]{\left\lVert#1\right\rVert}
\DeclarePairedDelimiter\ceil{\lceil}{\rceil}
\DeclarePairedDelimiter\floor{\lfloor}{\rfloor}
\algnewcommand{\IIf}[1]{\State\algorithmicif\ #1\ \algorithmicthen}
\algnewcommand{\EndIIf}{\unskip\ \algorithmicend\ \algorithmicif}
\def\bool{\{0,1\}}
\newcommand{\ket}[1]{|#1\rangle}
\newcommand{\bra}[1]{\langle #1|}
\newcommand{\ignore}[1]{}
\newtheorem{theorem}{Theorem} %[subsection]
\newtheorem*{theorem*}{Theorem}
\newtheorem{lemma}[theorem]{Lemma}
\newtheorem*{lemma*}{Lemma}
\newtheorem*{proposition*}{Proposition}
\newtheorem{definition}{Definition} %[subsection]
\title{{\bf A Quantum ``Lifting Theorem'' for Constructions of Pseudorandom Generators from Random Oracles}}
\author{Jonathan Katz\thanks{Dept.\ of Computer Science, University of Maryland. {\bf Email:} {\tt jkatz2@gmail.com}, {\tt benjsela@cs.umd.edu}. Work supported by NSF award CNS-2154705.} \and Benjamin Sela$^*$}
\date{}%July 2021}
\def\A{\mathcal{A}}
\def\B{\mathcal{B}}
\newcommand{\qeq}[1]{=}
\newcommand{\extFunc}[1]{{\mathbbm{1}^{(#1)}}}
\begin{document}

\maketitle

\begin{abstract} We study the (quantum) security of pseudorandom generators (PRGs) constructed from random oracles. 
We prove a ``lifting theorem'' showing, roughly, that if such a PRG is unconditionally secure against classical adversaries making polynomially many queries to the random oracle, then it is also (unconditionally) secure against quantum adversaries in the same sense. 
%We show that any distinguisher making quantum queries to the random oracle can be replaced by a distinguisher making only classical queries to the random oracle with only a negligible loss in distinguishing advantage, and only a polynomial blowup in the number of queries. 
As a result of independent interest, we also show that any pseudo-deterministic quantum-oracle algorithm (i.e., a quantum algorithm that with high probability returns the same value on repeated executions) can be simulated by a computationally unbounded---but query bounded---classical-oracle algorithm with only a polynomial blowup in the number of queries. This implies as a corollary that our lifting theorem holds even for PRGs that themselves make quantum queries to the random oracle.
\end{abstract}

\section{Introduction}

The random-oracle model (ROM) has long been a useful heuristic for proving the security of cryptosystems that make black-box use of a cryptographic hash function \cite{BR93}. Boneh et al.~\cite{boneh_random_2011} 
observed the need to consider the quantum random-oracle model (QROM) when dealing with quantum adversaries, to take into account the fact that such attackers can evaluate the random oracle on a superposition of inputs.
Although many security results in the ROM can be translated to the~QROM~\cite{boneh_random_2011,yamakawa_classical_2021,zhang_quantum_2019,kramer_encryption_2020}, this is not always the case.
Indeed, Zhang et al.~\cite{zhang_quantum_2019} show that, in general, an adversary making polynomially many quantum queries to a random oracle cannot be simulated by an adversary making polynomially many classical queries to the random oracle. 
Yamakawa and Zhandry extended this result to show 
cryptosystems that are secure in the ROM but not in the QROM, first for the case of digital signatures and public-key encryption schemes~\cite{yamakawa_classical_2021}, and then for one-way functions and collision-resistant hash functions~\cite{yamakawa_verifiable_2022}. 

The above line of work left open the question of showing separations or translation results for other cryptographic primitives constructed from random oracles.
For the particular case of pseudorandom generators (PRGs), Yamakawa and Zhandry suggest (without proof) that if the Aaronson-Ambainis conjecture~\cite{aaronson_need_2014} holds\footnote{The Aaronson-Ambainis conjecture roughly states that any quantum-oracle algorithm outputting a single bit can be simulated by a classical-oracle algorithm making a similar number of queries.} then any unconditionally secure construction of a PRG in the ROM remains secure in the~QROM.
Here, we prove this claim by Yamakawa and Zhandry; in fact, we show the claim is true  
without assuming the Aaronson-Ambianis conjecture. 
Specifically, we prove a ``lifting theorem'' showing that for any deterministic algorithm $G$ having access to a random oracle, if there is an attacker distinguishing the output of $G$ from random using $Q$ \emph{quantum} queries to the random oracle then there is an attacker distinguishing the output of $G$ from random using ${\sf poly}(Q)$ \emph{classical} queries to the random oracle.
As a result of additional interest, we analyze the behavior of so-called \emph{pseudo-deterministic} algorithms, i.e., quantum-oracle algorithms that, with high probability over the measurement randomness of their final state, output the same value on repeated executions (for any oracle). We prove that any pseudo-deterministic algorithm making $Q$ quantum queries to an oracle can be simulated by an algorithm making ${\sf poly}(Q)$ classical queries to the same oracle. An easy corollary is that our aforementioned result holds even if $G$ makes quantum queries to its random oracle.

\subsection{Overview of our Techniques} 
We give a high-level overview of our results.

\medskip\noindent{\bf PRG lifting theorem.} Consider a deterministic (classical) algorithm $G$ based on a random oracle~$H$.
We want to show that any algorithm $\A_{\sf quantum}$ distinguishing the output of $G$ (on a uniform seed) from a random string, and making $Q$ quantum queries to~$H$ can be emulated by an algorithm $\A_{\sf classical}$ making ${\sf poly}(Q)$ classical queries to~$H$.
Intuitively, our proof relies on the fact that the distinguishing advantage of $\A_{\sf quantum}$ can only come from  correlations between the output of $G$ and the random oracle.  
Since $G$ only makes polynomially many queries to~$H$, there are only polynomially many oracle inputs that are ``usefully'' correlated with the output, and those queries can be made by~$\A_{\sf classical}$.

More concretely, say an oracle input is useful if it has some nonnegligible probability of being queried by~$G$, where the probability is over the choice of the seed and the random oracle. There can only be polynomially many such inputs.
Our classical distinguisher $\A_{\sf classical}$ proceeds by first querying the oracle $H$ on those ``useful'' points, and then extending its partial view of the random oracle to a function $H'$ defined on the entire domain by uniformly generating the remainder of the function values. We prove that the distinguishing advantage of $\A_{\sf quantum}$ given oracle $H'$ is almost as large as its advantage when given access to the original oracle~$H$. Since $\A_{\sf classical}$ knows the entire function~$H'$, it can simulate $\A_{\sf quantum}$ without making any more queries to~$H$ and the result follows.

\medskip\noindent{\bf Pseudo-deterministic algorithms.} The above considers a classical PRG, but we would like to extend the results to the case where $G$ itself makes quantum queries to~$H$. In that case, it seems natural to require that $G$ satisfies the following property: for any $H$ and any seed~$s$, there is a value $g$ such that $G^{\ket{H}}(s)$ outputs $g$ with overwhelming probability over its measurement randomness. A quantum algorithm with that property is called pseudo-deterministic \cite{bartusek_obfuscation_2023}. We show that any pseudo-deterministic oracle algorithm can be simulated by a computationally unbounded classical algorithm making a polynomialy related number of queries to the same oracle. This allows us to replace the quantum PRG with a classical one, at which point we can apply our previous result.

As intuition for our proof, consider a pseudo-deterministic algorithm $\A_{\sf quantum}$ interacting with an oracle~$H$. We show that even though the answer to a quantum query to $H$ might depend on the value of the oracle at every point in its domain, the output of $\A_{\sf quantum}$ can only depend on the value of~$H$ at a polynomial number of ``critical points.'' (Yamakawa and Zhandry~\cite{yamakawa_verifiable_2022} mention this fact in passing without proof.) We not only prove this statement (cf.\ Lemma~\ref{lemma:oracleDependance}), but also provide a way to identify those critical points. We then present an algorithm which, given any pseudo-deterministic oracle algorithm, makes a polynomially related number of classical queries to the same oracle and outputs the same value as the quantum algorithm with high probability.

\subsection{Related Work}

%\medskip\noindent{\bf Lifting theorems.} 
There has been much work on translating proofs of security in the ROM to the QROM, generally by showing that if a security proof in the ROM satisfies certain conditions on how it interacts with the random oracle, then the security lifts to the setting where an adversary has quantum access to the random oracle for free. For instance, in their work introducing the QROM~\cite{boneh_random_2011}, Boneh et al.\ defined  \emph{history-free} reductions and showed that security proofs satisfying this definition can be lifted from the ROM to the QROM. Song~\cite{song_note_2014} introduced a more detailed framework for translating certain classes of reductions from the ROM to the QROM, and as a consequence proved a lifting theorem for the Full Domain Hash (FDH) signature scheme. Zhang et al.~\cite{zhang_quantum_2019} formalized \emph{committed programming reductions} in which the strategy by which the simulator reprograms the random oracle is required to be fixed prior to interacting with the adversary. They not only show that this restricted class of reductions can be used to prove security of a handful of well-known cryptographic schemes, but that such reductions can be lifted to the QROM. Yamakawa and Zhandry~\cite{yamakawa_classical_2021} give a lifting theorem for something they call a search-type game as well as for digital signature schemes that are restricted in how they make use of the ROM; that model captures the FDH signature scheme as well as signatures constructed using the Fiat-Shamir transform. Kramer and Struck~\cite{kramer_encryption_2020} studied the class of so-called ``oracle-simple'' PKE schemes and show that security of such schemes in the ROM lifts to the~QROM.

We note that the above works largely either focus on translating classes of security proofs from the ROM to the QROM, or showing that security can be lifted if the cryptographic scheme itself satisfies certain restrictions on how it interacts with the random oracle. In contrast, we wish to give a lifting theorem for the security for an arbitrary PRG construction where we cannot argue that a certain security proof can be employed in the ROM, and where none of the aforementioned random oracle restrictions can be assumed.

\subsection{Open Questions} 
We conjecture that our result can be extended to constructions of pseudorandom functions (PRFs) in the ROM. 
Note that there are PRFs that are secure for distinguishers making classical queries to the PRF but not for distinguishers making quantum queries to the~PRF \cite{6375347}. Thus, any lifting theorem in this setting should consider distinguishers making only classical queries to the PRF but making classical/quantum queries to the random oracle.
A starting point might be to try to extend our results to the case of \emph{weak} PRFs, where the distinguisher is not given the ability to make any queries to the PRF at all.

\section{Preliminaries}
%\medskip\noindent{\bf Notation.} 
For a finite set $X$, we write $x\leftarrow X$ to denote that $x$ is sampled uniformly from $X$. For a distribution~$\mathcal{X}$, we write $x\leftarrow \mathcal{X}$ to denote that $x$ is sampled according to~$\mathcal{X}$. We let $\mathbbm{1}$ be the identity function, where the domain and range will be clear from context. For a function $H$, we let $D_{H}$ denote the domain of~$H$. 
%If $h$ is a partial function defined on a set~$D'$, then we refer to $D'=D_h$ as the domain of~$h$. 
Let ${\sf Func}_{n,m}=\{H:\{0,1\}^{n}\mapsto \{0,1\}^{m}\}$. If $H:D_H \mapsto \bool^m$, with $D_H \subseteq \bool^n$, is a (partial) function, and 
$h$ is a partial function whose domain is a subset of $D_H$, we say that $H$ and $h$ are \emph{consistent} if $H(x)=h(x)$ for all $x \in D_{h}$; we write ${\sf Func}_{n,m}(h)$ to denote the subset of ${\sf Func}_{n,m}$ that is consistent with~$h$. %We drop the subscripts and write ${\sf Func}(h)$ to denote the set of all functions consistent with $h$ with no restriction on the domain and range. 
For a partial function $f$ (not necessarily consistent with~$H$) such that $D_{f}\subseteq D_{H}$,  we write $H^{(f)}$  to denote the following function:
\[ H^{(f)}(x) = \begin{cases} 
      f(x) & x \in D_{f} \\
      H(x) & \text{otherwise}. 
   \end{cases}
\]
For a partial function $f$ and a function $H \in {\sf Func}_{n,m}(f)$, we let $H\setminus f$ denote the partial function consistent with $H$ and defined exactly where $f$ is not defined. Similarly, for two partial functions $f,g$ which do not disagree on any value for which they are both defined, we let $f\cup g$ denote the (partial) function with domain $D_{f}\cup D_{g}$ that is consistent with both functions. We let ${\sf negl}(\cdot)$ denote an unspecified negligible function, and let ${\sf poly}(\cdot)$ denote an unspecified polynomial.

\subsection{Quantum Computation}
We associate an $n$-qubit quantum system with the complex Hilbert space $\mathcal{H}=\mathbb{C}^{2^n}$. A quantum state $\ket{\psi}\in \mathcal{H}$ is a column vector with norm~1. 
The computational basis of $\mathcal{H}$ is $\{\ket{x}\mid x \in \{0,1\}^{n}\}$, and any quantum state in $\mathcal{H}$ can be written as $\sum_{x \in \{0,1\}^{n}}\alpha_{x}\ket{x}$, where $\alpha_{x} \in \mathbb{C}$ for all~$x$ and $\sum_{x \in \{0,1\}^{n}}|\alpha_{x}|^2=1$. A measurement of a quantum state $\ket{\psi}=\sum_{x}\alpha_{x}\ket{x}$ in the computational basis returns the value $x \in \{0,1\}^{n}$ with probability $|\alpha_{x}|^{2}$; we call the observed $x$ the measurement outcome. If the outcome of measuring $\ket{\psi}$ is $x$, then the measurement collapses 
$\ket{\psi}$ to~$\ket{x}$.
If $\ket{\psi}$ is a column vector, then $\bra{\psi}$ is the row vector obtained by taking the conjugate transpose of~$\ket{\psi}$.

We write $\norm{\ket{\psi}}$ to denote the Euclidean norm of the vector $\ket{\psi}$; i.e., if 
$\ket{\psi}=\sum_{x}\alpha_{x}\ket{x}$ then $\norm{\ket{\psi}}=\sqrt{\sum_x |\alpha_x|^2}$.
The Euclidean distance $\norm{\ket{\phi}-\ket{\psi}}$ gives a measure of how much two quantum states $\ket{\phi},\ket{\psi}$ differ. Another measure of distance between two quantum states is given by the trace distance between their respective density matrices.  For a quantum state $\ket{\psi}$, its corresponding density matrix is $\ket{\psi}\bra{\psi}$. For quantum states $\ket{\phi}$ and $\ket{\psi}$, let $\rho=\ket{\phi}\bra{\phi}$ and $\sigma=\ket{\psi}\bra{\psi}$ be their density matrices. 
The trace distance between $\rho$ and $\sigma$ is given by
$$
\text{TD}(\rho,\sigma)=\frac{1}{2}\text{Tr}\left[\sqrt{(\rho-\sigma)^{\dag}(\rho-\sigma)}\right].
$$ 
The Euclidean distance between two quantum states is an upper bound on the trace distances between their respective density matrices~\cite{10.1007/978-3-030-26951-7_9}.
\begin{lemma}
    \label{lemma:measure} 
    Let $\ket{\phi}$ and $\ket{\psi}$ be quantum states with Euclidean distance $\epsilon$. Then the trace distance between $\ket{\phi}\bra{\phi}$ and $\ket{\psi}\bra{\psi}$ is at most $\epsilon \cdot \sqrt{1-\epsilon^2/4}\leq \epsilon$.
\end{lemma}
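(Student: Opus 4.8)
The plan is to reduce the claim to the standard closed form for the trace distance between two \emph{pure} states, $\text{TD}(\ket{\phi}\bra{\phi},\ket{\psi}\bra{\psi})=\sqrt{1-|\langle\phi|\psi\rangle|^{2}}$, and then to convert the hypothesis $\norm{\ket{\phi}-\ket{\psi}}=\epsilon$ into a lower bound on $|\langle\phi|\psi\rangle|$; once both pieces are in hand the bound is a one-line algebraic manipulation. This is a standard fact, so no real obstacle is expected; the only step that needs a moment's thought is the derivation of the pure-state trace-distance formula.

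For that formula, set $\rho=\ket{\phi}\bra{\phi}$, $\sigma=\ket{\psi}\bra{\psi}$, and $\Delta=\rho-\sigma$. The operator $\Delta$ is Hermitian, it is traceless since $\text{Tr}\,\rho=\text{Tr}\,\sigma=1$, and it is supported on the at-most-two-dimensional subspace spanned by $\ket{\phi}$ and $\ket{\psi}$; hence its nonzero eigenvalues must be $\lambda$ and $-\lambda$ for a single $\lambda\ge 0$. Computing the Hilbert--Schmidt norm of $\Delta$ in two ways, $2\lambda^{2}=\text{Tr}[\Delta^{2}]=\text{Tr}[\rho^{2}]-2\,\text{Tr}[\rho\sigma]+\text{Tr}[\sigma^{2}]=2-2|\langle\phi|\psi\rangle|^{2}$, so $\lambda=\sqrt{1-|\langle\phi|\psi\rangle|^{2}}$, and $\text{TD}(\rho,\sigma)=\tfrac12(|\lambda|+|-\lambda|)=\lambda$ by the definition of trace distance.

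Next I would unpack the hypothesis: expanding $\epsilon^{2}=\norm{\ket{\phi}-\ket{\psi}}^{2}=\langle\phi|\phi\rangle-2\,\mathrm{Re}\langle\phi|\psi\rangle+\langle\psi|\psi\rangle=2-2\,\mathrm{Re}\langle\phi|\psi\rangle$ gives $\mathrm{Re}\langle\phi|\psi\rangle=1-\epsilon^{2}/2$. Since $|\langle\phi|\psi\rangle|^{2}\ge(\mathrm{Re}\langle\phi|\psi\rangle)^{2}=(1-\epsilon^{2}/2)^{2}$, we obtain $1-|\langle\phi|\psi\rangle|^{2}\le 1-(1-\epsilon^{2}/2)^{2}=\epsilon^{2}-\epsilon^{4}/4=\epsilon^{2}(1-\epsilon^{2}/4)$, where the factor $1-\epsilon^{2}/4$ is nonnegative because $\epsilon\le 2$ for normalized states. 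Substituting into the formula from the previous step yields $\text{TD}(\rho,\sigma)=\sqrt{1-|\langle\phi|\psi\rangle|^{2}}\le\epsilon\sqrt{1-\epsilon^{2}/4}\le\epsilon$, which is exactly the claimed bound.
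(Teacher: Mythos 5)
Your proof is correct. The paper itself does not prove Lemma~\ref{lemma:measure}; it simply cites the result, so there is no in-paper argument to compare against, but your derivation is the standard one: the pure-state formula $\mathrm{TD}(\ket{\phi}\bra{\phi},\ket{\psi}\bra{\psi})=\sqrt{1-|\langle\phi|\psi\rangle|^{2}}$ (your traceless-Hermitian, rank-$\le 2$ eigenvalue argument is a clean way to get it), combined with $\norm{\ket{\phi}-\ket{\psi}}^{2}=2-2\,\mathrm{Re}\langle\phi|\psi\rangle$ and $|\langle\phi|\psi\rangle|\ge|\mathrm{Re}\langle\phi|\psi\rangle|$, which yields exactly the stated bound $\epsilon\sqrt{1-\epsilon^{2}/4}\le\epsilon$. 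You also correctly note the only points needing care: the squaring step handles a possibly negative real part (when $\epsilon>\sqrt{2}$), and $\epsilon\le 2$ for normalized states keeps the radicand nonnegative.
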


Since the trace distance between two density matrices is an upper bound on an algorithm's ability to distinguish the corresponding quantum states, the distinguishing advantage of an algorithm given one of two quantum states is upper bounded by their Euclidean distance.

\medskip\noindent{\bf Quantum algorithms.} 
We consider quantum algorithms having classical input and output. Any such algorithm 
$\mathcal{A}$ consists of a unitary transformation mapping some initial state $\ket{\psi_{{\text{in}}}}$ to a final state $\ket{\psi_{{\text{out}}}}$, followed by a measurement of the final state in the computational basis to produce the output. We write $\mathcal{A}:\{0,1\}^{n}\mapsto \{0,1\}^{m}$ to indicate that an $n$-bit classical input is encoded as the initial quantum state, and the measurement outcome is an $m$-bit string. We write $y\leftarrow \A(x)$ to denote that $y$ is the result of running $\A$ on input $\ket{x}$ and measuring the resulting quantum state.

\medskip\noindent{\bf Oracle algorithms.} 
Algorithm $\A$ with classical oracle access to a function \mbox{$H:\{0,1\}^{n}\mapsto \{0,1\}^{m}$} (written~$\A^H$) can query $x \in \bool^n$ and receive the result $H(x) \in \bool^m$ in a single step.
Giving 
$\A$ quantum oracle access to~$H$ (written~$\mathcal{A}^{\ket{H}}$) means that $\A$ can evaluate the unitary
transformation that maps $\ket{x}\ket{y}\mapsto \ket{x}\ket{y\oplus H(x)}$ in a single step. Concretely, this means that 
$\A$ can query $H$ on a superposition of inputs $\ket{\phi} = \sum_{x \in \bool^{n}}\alpha_{x}\ket{x}$ and receive in return the state $\ket{\psi} = \sum_{x \in \bool^{n}}\alpha_{x}\ket{x}\ket{H(x)}$.\footnote{Technically there are two registers here, 
and $\A$ can query $\ket{\phi} = \sum_{x \in \bool^{n}, y \in \bool^m}\alpha_{x}\ket{x}\ket{y}$ to 
receive in return $\sum_{x \in \bool^{n}, y \in \bool^m}\alpha_{x}\ket{x}\ket{y \oplus H(x)}$, but we often omit the second register for notational clarity.} 
For a quantum query $\ket{\phi}=\sum_{x \in \bool^{n}}\alpha_{x}\ket{x}$,
 we define the {\em query magnitude at~$x$} to be $q_{x}(\ket{\phi}):=|\alpha_{x}|^{2}$.
If $\A^{\ket{H}}$ makes a total of $Q$ queries $\ket{\phi_{1}},\ldots,\ket{\phi_{Q}}$  to 
its oracle~$H$, then 
the {\em total query magnitude at~$x$}~is 
$q^{H}_{x}:=\sum_{i=1}^{Q}q_{x}(\ket{\phi_{i}})$.
The following lemma is one way to formalize the intuition that if the total query magnitude of some algorithm at a point~$x$ is small, then the output of the algorithm cannot depend too much of the value of the oracle at~$x$.

\begin{lemma}[Swapping lemma \cite{nayebi_quantum_2015}]
\label{lemma:swapping}
Let $\ket{\phi_f}$ and $\ket{\phi_g}$ denote the final states of a $Q$-query algorithm $\A$ when given quantum oracle access to $f$ and $g$, respectively. Then
$$
\norm{\ket{\phi_f}-\ket{\phi_g}}\leq \sqrt{Q\cdot \sum_{x\,:\,f(x)\neq g(x)}q^{f}_{x}}.
$$
\end{lemma}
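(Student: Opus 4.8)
The plan is to prove Lemma~\ref{lemma:swapping} via the standard hybrid argument over the $Q$ queries, combined with Cauchy--Schwarz. Write $\A$'s execution as an alternation of internal unitaries $U_0,\dots,U_Q$ and oracle unitaries, so that on oracle $H$ the final state is $\ket{\phi_H}=U_Q\,O_H\,U_{Q-1}\cdots U_1\,O_H\,U_0\ket{\psi_{\mathrm{in}}}$, where $O_H\colon\ket{x}\ket{y}\mapsto\ket{x}\ket{y\oplus H(x)}$. For $j=0,\dots,Q$ define the hybrid state $\ket{\phi^{(j)}}$ obtained by answering the first $j$ queries with $f$ and the remaining $Q-j$ queries with $g$; then $\ket{\phi^{(Q)}}=\ket{\phi_f}$ and $\ket{\phi^{(0)}}=\ket{\phi_g}$, so by the triangle inequality $\norm{\ket{\phi_f}-\ket{\phi_g}}\le\sum_{j=1}^{Q}\norm{\ket{\phi^{(j)}}-\ket{\phi^{(j-1)}}}$.

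Next I would bound a single hybrid step. The states $\ket{\phi^{(j)}}$ and $\ket{\phi^{(j-1)}}$ evolve identically through the first $j-1$ queries (both use $f$ there), so the state $\ket{\psi_j}$ just before the $j$-th query is common to both; crucially, this is exactly the state of $\A^{\ket{f}}$ before its $j$-th query, so the $j$-th query $\ket{\phi_j}$ of $\A^{\ket{f}}$ in the sense of the lemma is the reduced state of $\ket{\psi_j}$ on the query register. From $\ket{\psi_j}$, hybrid $j$ applies $O_f$ while hybrid $j-1$ applies $O_g$, after which both apply the same fixed unitary (the remaining internal unitaries and $g$-queries); since that unitary preserves norm, $\norm{\ket{\phi^{(j)}}-\ket{\phi^{(j-1)}}}=\norm{(O_f-O_g)\ket{\psi_j}}$. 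Letting $P=\sum_{x:f(x)\neq g(x)}\ket{x}\bra{x}$ act on the query register, $O_f$ and $O_g$ commute with $P$ and agree on the range of $I-P$, so $(O_f-O_g)\ket{\psi_j}=(O_f-O_g)P\ket{\psi_j}$; as $O_f,O_g$ are unitary this gives $\norm{(O_f-O_g)\ket{\psi_j}}\le 2\norm{P\ket{\psi_j}}=2\sqrt{\sum_{x:f(x)\neq g(x)}q_x(\ket{\phi_j})}$.

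Finally I would sum over $j$ and apply Cauchy--Schwarz: $\norm{\ket{\phi_f}-\ket{\phi_g}}\le 2\sum_{j=1}^{Q}\sqrt{\sum_{x:f(x)\neq g(x)}q_x(\ket{\phi_j})}\le 2\sqrt{Q\sum_{j=1}^{Q}\sum_{x:f(x)\neq g(x)}q_x(\ket{\phi_j})}=2\sqrt{Q\sum_{x:f(x)\neq g(x)}q^f_x}$, using $\sum_j q_x(\ket{\phi_j})=q^f_x$ by definition of the total query magnitude. This matches the claimed bound up to the leading constant, which I would either shave by a tighter analysis of the operator $(O_f-O_g)P$ or simply note is immaterial for the applications.

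The step I expect to need the most care is the hybrid bookkeeping: verifying that the pre-$j$-th-query state is common to hybrids $j$ and $j-1$ and coincides with that of $\A^{\ket{f}}$, which is precisely what forces the query magnitudes appearing in the bound to be those of the $f$-run (hence $q^f_x$ and not $q^g_x$); using the ``$f$ first, then $g$'' ordering of the hybrids, rather than the reverse, is exactly what delivers $q^f_x$. The remaining ingredients --- unitary invariance of the Euclidean norm, commutation of $O_H$ with a projector on the query register, and Cauchy--Schwarz --- are routine.
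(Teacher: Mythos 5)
The paper itself gives no proof of Lemma~\ref{lemma:swapping}---it is imported from \cite{nayebi_quantum_2015}---so there is nothing internal to compare against; your hybrid argument is the standard proof of this kind of statement, and your bookkeeping is right: because hybrids $j$ and $j-1$ both answer the first $j-1$ queries with $f$, the common pre-query state is exactly the $f$-run's state before its $j$-th query, which is what forces the magnitudes $q^{f}_{x}$ (and not $q^{g}_{x}$) into the bound, and the step estimate $\norm{(O_f-O_g)\ket{\psi_j}}\le 2\norm{P\ket{\psi_j}}$ followed by Cauchy--Schwarz is sound. What you have correctly proved is $\norm{\ket{\phi_f}-\ket{\phi_g}}\le 2\sqrt{Q\sum_{x:f(x)\neq g(x)}q^{f}_{x}}$.

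The one genuine issue is the factor $2$, and it cannot be ``shaved by a tighter analysis of $(O_f-O_g)P$'': with the paper's XOR-oracle convention the constant $2$ is optimal, so the constant-$1$ inequality as printed is not provable. Concretely, take $Q=1$, query register prepared in $\frac{1}{\sqrt{2}}(\ket{x_0}+\ket{x_1})$ with the answer register in $\frac{1}{\sqrt{2}}(\ket{0}-\ket{1})$, and let $f,g$ agree everywhere except that $g(x_0)=1\oplus f(x_0)$. Phase kickback makes the two final states orthogonal, so $\norm{\ket{\phi_f}-\ket{\phi_g}}=\sqrt{2}$, while $\sqrt{Q\cdot\sum_{x:f(x)\neq g(x)}q^{f}_{x}}=\sqrt{1/2}$; the ratio is exactly $2$. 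So either the cited source uses a different normalization or a constant was dropped in restating it; in any case your factor-$2$ version is the correct general form, and, as you note, the discrepancy is immaterial for this paper: the lemma is used only inside the proof of Lemma~\ref{lemma:oracleDependance}, where the extra factor merely rescales the explicit constants such as $(1-2\delta)^{2}/Q$ by constant factors.
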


A related result is that if an oracle is reprogrammed via a random process that is not too likely to reprogram any particular point, then a quantum-oracle algorithm interacting with either the original or reprogrammed oracle cannot detect the reprogramming.
The following is a restatement of a result by Alagic et al.~\cite[Lemma~3]{alagic_post-quantum_2022}.

\begin{lemma}[Reprogramming lemma]
\label{lemma:repro1}
Let $\mathcal{D}$ be a distinguisher in the following experiment:
\begin{enumerate}
    \item $\mathcal{D}$ outputs a function $F_{0}=F:\{0,1\}^{n}\mapsto \{0,1\}^{m}$ and a randomized algorithm~$\mathcal{B}$, whose output is a partial function $R:D_{R}\mapsto \{0,1\}^{m}$, where $D_{R}\subseteq \{0,1\}^{n}$. Let 
    $$
    \epsilon = \max_{x \in \{0,1\}^{n}}\left\{\Pr_{R\leftarrow \mathcal{B}}[x \in D_{R}]\right\}.
    $$
    \item $\mathcal{B}$ is run to obtain $R$. Let $F_{1}=F_{0}^{(R)}$. A uniform bit $b$ is chosen, and $\mathcal{D}$ is given quantum access to $F_{b}$. 
    \item $\mathcal{D}$ loses access to $F_{b}$, and receives the randomness $r$ used to invoke $\mathcal{B}$ in phase 2. Then $\mathcal{D}$ outputs a guess $b'$.
\end{enumerate}
For any $\mathcal{D}$ making $Q$ queries, it holds that 
$$
\left|\Pr[\mathcal{D} \text{ outputs } 1\mid b=1]-\Pr[\mathcal{D} \text{ outputs }1\mid b=0]\right|\leq 2\cdot Q\cdot \sqrt{\epsilon}.
$$
\end{lemma}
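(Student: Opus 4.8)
The plan is a hybrid argument over the $Q$ queries of $\mathcal{D}$, where each hybrid step is controlled by the single-query estimate underlying the Swapping lemma (Lemma~\ref{lemma:swapping}) and where the randomness of $\mathcal{B}$ is dealt with by an averaging step at the end. For $j\in\{0,1,\dots,Q\}$, I would define the hybrid $\mathsf{Hyb}_j$ to be identical to the experiment in the statement, except that $\mathcal{D}$'s first $j$ queries are answered using $F_0$ and its remaining $Q-j$ queries are answered using $F_1=F_0^{(R)}$; in every hybrid, $R$ is still produced by running $\mathcal{B}$, and $\mathcal{D}$ is still handed the randomness $r$ before it outputs a guess. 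Then $\mathsf{Hyb}_0$ is precisely the experiment conditioned on $b=1$ and $\mathsf{Hyb}_Q$ is precisely the experiment conditioned on $b=0$, so by the triangle inequality it suffices to prove $\bigl|\Pr[\mathsf{Hyb}_{j-1}\to1]-\Pr[\mathsf{Hyb}_j\to1]\bigr|\le 2\sqrt{\epsilon}$ for each $j\in\{1,\dots,Q\}$ (here $\to 1$ abbreviates ``outputs $1$'').

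The crux is that $\mathsf{Hyb}_{j-1}$ and $\mathsf{Hyb}_j$ differ only in how the $j$-th query is answered ($F_1$ versus $F_0$), and that in both hybrids the state of $\mathcal{D}$ immediately before its $j$-th query was produced using $F_0$ alone: it is a fixed unit vector $\ket{\psi_j}$ that does not depend on $R$, so the query magnitudes $q^{(j)}_x$ of $\mathcal{D}$'s $j$-th query are also independent of $R$ and satisfy $\sum_x q^{(j)}_x=1$. Fix $R$, let $S_R=\{x:F_0(x)\neq F_1(x)\}$ (note $S_R\subseteq D_R$ since $F_1=F_0^{(R)}$), and let $v$ be the component of $\ket{\psi_j}$ supported on first-register values in $S_R$, so that $\norm{v}=\sqrt{\sum_{x\in S_R}q^{(j)}_x}$. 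Writing $O_F$ for the unitary $\ket{x}\ket{y}\mapsto\ket{x}\ket{y\oplus F(x)}$, the oracles $O_{F_0}$ and $O_{F_1}$ agree off $S_R$, so $O_{F_0}\ket{\psi_j}-O_{F_1}\ket{\psi_j}=(O_{F_0}-O_{F_1})v$ and the triangle inequality gives $\norm{O_{F_0}\ket{\psi_j}-O_{F_1}\ket{\psi_j}}\le 2\norm{v}=2\sqrt{\sum_{x\in S_R}q^{(j)}_x}\le 2\sqrt{\sum_{x\in D_R}q^{(j)}_x}$. Since every later operation of $\mathcal{D}$ (its remaining $Q-j$ queries, all to $F_1$, the hand-off of $r$, and the final measurement) is identical in the two hybrids, the two final states are within the same Euclidean distance; hence by Lemma~\ref{lemma:measure} the two output distributions, for this fixed $R$, differ by at most $2\sqrt{\sum_{x\in D_R}q^{(j)}_x}$.

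It remains to average over $R\leftarrow\mathcal{B}$. By linearity of expectation and the triangle inequality, $\bigl|\Pr[\mathsf{Hyb}_{j-1}\to1]-\Pr[\mathsf{Hyb}_j\to1]\bigr|\le\mathbb{E}_{R\leftarrow\mathcal{B}}\bigl[2\sqrt{\sum_{x\in D_R}q^{(j)}_x}\bigr]$, and by Jensen's inequality this is at most $2\sqrt{\mathbb{E}_{R\leftarrow\mathcal{B}}\bigl[\sum_{x\in D_R}q^{(j)}_x\bigr]}$. Here I would use that the $q^{(j)}_x$ do not depend on $R$, which gives $\mathbb{E}_{R\leftarrow\mathcal{B}}\bigl[\sum_{x\in D_R}q^{(j)}_x\bigr]=\sum_x q^{(j)}_x\cdot\Pr_{R\leftarrow\mathcal{B}}[x\in D_R]\le\epsilon\sum_x q^{(j)}_x=\epsilon$. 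So each hybrid step costs at most $2\sqrt{\epsilon}$, and summing over the $Q$ steps gives the claimed bound $2Q\sqrt{\epsilon}$.

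The step I expect to be the main obstacle — or at least the one requiring care — is the direction of the interpolation: the queries must be switched from $F_1$ back to $F_0$ starting from the first, so that the state just before the switched query was produced entirely from the un-reprogrammed oracle $F_0$ and is therefore a fixed state whose query magnitudes are uncorrelated with $R$. With the opposite ordering, $q^{(j)}_x$ could be correlated with the event $x\in D_R$ (a distinguisher that has already touched the reprogrammed oracle might concentrate query weight on the reprogrammed points), and the clean inequality $\mathbb{E}_R[\sum_{x\in D_R}q^{(j)}_x]\le\epsilon$ would no longer be available. Everything else — the single-query norm estimate, Lemma~\ref{lemma:measure}, Jensen, and the triangle inequality — is routine; this is, in essence, the argument behind \cite[Lemma~3]{alagic_post-quantum_2022}, which the statement restates.
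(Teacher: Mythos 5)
Your proof is correct. Note, however, that the paper itself does not prove this lemma: it is imported verbatim as a restatement of Lemma~3 of Alagic et al.\ \cite{alagic_post-quantum_2022}, so there is no in-paper argument to compare against. Your query-by-query hybrid, with the crucial observation that the state before the switched query is produced from $F_0$ alone and hence its query magnitudes are uncorrelated with $R$, followed by the single-query norm bound, Lemma~\ref{lemma:measure}, and Jensen, is exactly the standard way to prove such adaptive-reprogramming statements, and your worry about the direction of the interpolation is the right one. Two small remarks: (i) strictly you should condition on the randomness $r$ of $\mathcal{B}$ rather than on $R$ (since $\mathcal{D}$ is handed $r$, and several $r$ may yield the same $R$); the argument is unchanged, as $D_R$ is a function of $r$ and the averaging step reads $\mathbb{E}_{r}\bigl[\sum_{x\in D_{R(r)}}q^{(j)}_x\bigr]\le\epsilon$. (ii) There is an even shorter route using the paper's own Swapping lemma (Lemma~\ref{lemma:swapping}) applied once to the full $Q$-query execution: with oracle $F_0$ the entire run is independent of $R$, so $\mathbb{E}_{R}\bigl[\sum_{x\in D_R}q^{F_0}_x\bigr]\le\epsilon Q$, and Jensen plus Lemma~\ref{lemma:measure} gives advantage at most $Q\sqrt{\epsilon}$, which is slightly stronger than the stated $2Q\sqrt{\epsilon}$; your per-query factor of $2$ (from bounding $\norm{(O_{F_0}-O_{F_1})v}\le 2\norm{v}$) is what makes your constant match the lemma exactly.
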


\medskip\noindent{\bf Pseudo-deterministic algorithms.}
Quantum algorithms whose outputs are (close to) deterministic will be of particular interest to us. 
We reformulate the definition to incorporate  a parameter~$\delta$ denoting the probability with which the algorithm returns a ``wrong'' value.

\begin{definition} 
Let $\A:\{0,1\}^{n}\mapsto \{0,1\}^{m}$ be a quantum algorithm.
%, and let $\ket{\phi_{x}}$ denote the quantum state produced by $A$ on input initial state~$\ket{x}$. 
We say $\A$ is {\bf $\delta$-deterministic} (for $\delta>1/2$) if for all $x \in \{0,1\}^{n}$ there exists $y_x \in \{0,1\}^{m}$ such that 
$$
    \Pr[y \leftarrow \A(x) : y\neq y_x]\leq \delta.
$$
\end{definition}

%\noindent
%A quantum-oracle algorithm $\delta$-deterministic if it is $\delta$-deterministic regardless of its oracle.

%\begin{definition}
\noindent
When two $\delta$-deterministic algorithms $\A$ and $\B$ agree on their most likely output on some input $x$, we write $\A(x)\qeq{\delta}\B(x)$. A quantum-oracle algorithm $\mathcal{A}$ is $\delta$-deterministic if $\mathcal{A}^{\ket{f}}$ is $\delta$-deterministic for every oracle~$f$. (We stress that in this case the value $y_x$ may depend on~$f$.)
%\end{definition}

\subsection{Pseudorandom Generators}

Roughly, a pseudorandom generator $G$ takes as input a uniform seed $s \in \bool^k$ and outputs a longer string $g \in \bool^\ell$ that is indistinguishable from a uniform $\ell$-bit string. 
%To define this more rigorously, we first define the relevant notion of indistinguishability more formally.
As we are interested in the case where $G$ has  oracle access to a random oracle, we specialize our definition to that setting.
Formally, the distinguishing advantage of $\A$ relative to~$G$ is
$$
    {\sf Adv}_{\A,G}^{\sf PRG} = \left|\Pr_{\substack{s \leftarrow \bool^k\\ H \leftarrow {\sf Func}_{n,m}}}\left[\A(G(s))=1\right]-\Pr_{\substack{g \leftarrow \bool^\ell\\ H \leftarrow {\sf Func}_{n,m}}}\left[\A(g)=1\right]\right|,
    $$
where both $\A$ and $G$ are given oracle access to~$H$. If $\A$ is given classical (resp., quantum) access to~$H$, we may emphasize this by referring to the \emph{classical (resp., quantum) distinguishing advantage of~$\A$ relative to~$G$}.
Note that $G$ might have either classical or quantum access to~$H$.

    \ignore{
\begin{definition}
\label{def:PRG}\label{def:PRG2}
The distinguishing advantage of $\A$

\end{definition}

We give standard definitions of a pseudorandom generator in the ROM and QROM below. 
\begin{definition}

    Let $G^{(\cdot)}:\bool^{k}\mapsto \bool^{\ell}$, with $\ell>k$, be a classical algorithm making $Q$ queries to its oracle. 
    \begin{enumerate}
        \item {\textbf{Quantum Distinguishing Advantage:}} Let $\A^{(\cdot)}$ be an oracle algorithm making quantum queries to its oracle. We define the advantage of $\A$ in distinguishing $G$ as

        \item {\textbf{Classical Distinguishing Advantage:}} Let $\A^{(\cdot)}$ be an oracle algorithm making classical queries to its oracle. We define the advantage of $\A$ in distinguishing $G$ as
    $$
    {\sf Adv}_{\A,G}^{\sf PRG}=\left|\Pr_{\substack{s \leftarrow \bool^k\\ H \leftarrow {\sf Func}_{n,m}}}\left[\A^{H}(G^{H}(s))=1\right]-\Pr_{r\leftarrow \bool^{\ell}}\left[\A^{H}(r)=1\right]\right|.
    $$

    \end{enumerate}

\end{definition}

 Similarly, we also define ${\sf Adv}^{\sf PRG}_{\A,G}$ and ${\sf Adv}^{\sf PRG}_{\A,G}$ as versions of the above quantities where $G$ now has quantum access to the random oracle $H$. In this case we relax the requirement that $G$ implement a deterministic function, and simply require that $G$ be $\delta$-deterministic.

\begin{definition}[\textbf{Quantum PRG in the ROM and QROM}]

    Let $G^{(\cdot)}:\bool^{k}\mapsto \bool^{\ell}$ with $\ell>k$ be a $\delta$-deterministic quantum algorithm making $Q_{G}$ quantum queries to a random oracle. 
    \begin{enumerate}
        \item {\textbf{Quantum Distinguishing Advantage:}} Let $\A^{(\cdot)}$ be an oracle algorithm making quantum queries to its oracle. We define the advantage of $A$ in distinguishing $G$ as
    $$
    {\sf Adv}_{\A,G}^{\sf PRG}=\left|\Pr_{\substack{s \leftarrow \bool^k\\ H \leftarrow {\sf Func}_{n,m}\\g\leftarrow G^{\ket{H}}}}\left[\A^{\ket{H}}(g)=1\right]-\Pr_{r\leftarrow \bool^{\ell}}\left[\A^{\ket{H}}(r)=1\right]\right|.
    $$

        \item {\textbf{Classical Distinguishing Advantage:}} Let $\A^{(\cdot)}$ be an oracle algorithm making classical queries to its oracle. We define the advantage of $\A$ in distinguishing $G$ as
    $$
    {\sf Adv}_{\A,G}^{\sf PRG}=\left|\Pr_{\substack{s \leftarrow \bool^k\\ H \leftarrow {\sf Func}_{n,m}\\g\leftarrow G^{\ket{H}}}}\left[\A^{H}(g)=1\right]-\Pr_{r\leftarrow \bool^{\ell}}\left[\A^{H}(r)=1\right]\right|.
    $$
    \end{enumerate}
\end{definition}
}

%\jnote{Do you want to define some notion of a $(T, \epsilon)$-PRG?}

%\bnote{I guess it depends on how we phrase the lifting theorem. Right now it just says that if $\A^{\ket{H}}$ distinguishes $G$ with advantage ${\sf Adv}_1$ then $\B^{H}$ distinguishes $G$ with advantage ${\sf Adv}_2$. I do not think we need the definition you mention unless we rephrase the lifting theorem as saying that a PRG is $(Q_1,\epsilon_1)$-secure in the ROM iff it is $(Q_2,\epsilon_2)$-secure in the QROM. Should I rewrite the theorem?}

\section{Lifting Theorem for Classical PRGs}

Fix an algorithm $G:\{0,1\}^{k}\mapsto \{0,1\}^{\ell}$ that  always makes exactly $Q_{G}$  classical queries to an oracle~$H:\bool^n\mapsto\bool^m$.
We show (cf.\ Theorem~\ref{theorem:lifting1}) that for any algorithm $\A$ making $Q_\A$ \emph{quantum} queries to~$H$ there is an algorithm~$\B$ making \emph{classical} queries to~$H$ such that (1)~the  distinguishing advantage of $\B$ relative to~$G$ is close to the  distinguishing advantage of $\A$ relative to~$G$, and (2)~the number of queries $\B$ makes to its oracle is polynomial in $k$, $n$, $Q_G$, $Q_\A$, and the inverse of the distinguishing advantage of~$\A$. 

%For a PRG , we wish to give a construction of a classical distinguisher $\B$ from any quantum distinguisher $\A$, such that ${\sf Adv}_{\B,G}^{\sf PRG}\geq {\sf Adv}_{\A,G}^{{\sf PRG}}-{\sf negl}(k)$. Additionally we would like $\B$ to make only polynomially more queries than $\A$ would have made. In section 3.1 we introduce some notation and distributions that will aid in our analysis, and in section 3.2 we reformulate the security definition of the PRG in terms of a set of games. This will allow us to use a hybrid argument to prove that $\B$ has distinguishing advantage almost as large as $\A$. In section 3.3 we define the distinguisher $\B$, and in section 3.4 we give a proof of the main result. A technical lemma analyzing a key property of our classical distinguisher is deferred to section 3.5.

\ignore{
\begin{theorem}
\label{theorem:lifting1_firstStatement}
    Let $G:\bool^{k}\mapsto \bool^{\ell}$ be a deterministic algorithm making $Q_{G}$ queries to an oracle $H:\bool^{n}\mapsto \bool^{m}$, and let $\A$ be an algorithm making $Q_{\A}$ quantum queries to~$H$. Let $\epsilon={\sf Adv}^{{\sf PRG}}_{\A,G}$.
    %the same oracle, and participating in the PRG distinguishing experiment defined in ${\sf PRG}_{\A,G}$ and ${\sf Rand}_{\A,G}$. 
  %  Then for any $\lambda$ \jnote{I did not see where $\lambda$ is defined in the proof. Is it an integer? A positive real?} there is an algorithm $\B_{\lambda}^{H}$ (Algorithm~\ref{alg:B}) that makes ${\sf poly}(k,n,Q_{G},Q_{\A},\lambda)$ classical queries to $H$ and satisfies 
%    $$
 %       {\sf Adv}^{{\sf PRG}}_{\B,G}\geq {\sf Adv}^{{\sf PRG}}_{\A,G}-\min\left\{\frac{3}{Q_{\A}},\frac{3}{\lambda}\right\}.
 %   $$
 Then there is an algorithm $\B$ making ${\sf poly}(k,n,Q_{G},Q_{\A},\epsilon^{-1})$ classical queries to $H$ with ${\sf Adv}^{{\sf PRG}}_{\B,G} \geq \epsilon/2$.
\end{theorem}
}
%\subsection{Preliminaries}

We begin by introducing some notation and definitions.
For notational convenience, we overload the definition of $G$ so that in addition to its actual output $g \in \bool^\ell$ it also outputs the list $\tau$ of queries it made to $H$ and the associated answers.
We let $\mathcal{G}$ denote the distribution on $(H, s, g, \tau)$ generated by sampling a uniform function~$H\in {\sf Func}_{n,m}$, choosing a uniform seed $s \in \bool^k$, and then running $G^H(s)$ to generate~$g$ and~$\tau$.
We also define the following conditional distributions:
\begin{itemize}
    \item For $g\in \bool^\ell$ in the range of the PRG, let $\mathcal{T}_{g}$ be the distribution on $\tau$ after the above process conditioned on $g$ being the output of the PRG. Similarly, for a partial function~$h$ such that there exists $(H,s)$ with $H \in {\sf Func}_{n,m}(h)$ satisfying $g = G^{H}(s)$, let $\mathcal{T}_{g,h}$ be the distribution on $\tau$ conditioned on (i)~$g$ being the output of the PRG and (ii)~$H \in {\sf Func}_{n,m}(h)$. 

    \item  We write ${\mathcal{O}}$ for the uniform distribution over ${\sf Func}_{n,m}$. For $g$ in the range of the PRG, we let $\mathcal{O}_{g}$ denote the distribution over ${\sf Func}_{n,m}$ conditioned on $g$ being the output of the PRG. Similarly, for a partial function $h$ such that there exists $(H,s)$ with $H \in {\sf Func}_{n,m}(h)$ satisfying $g = G^{H}(s)$,  let $\mathcal{O}_{g,h}$ be the distribution over ${\sf Func}_{n,m}$ conditioned on (i)~$g$ being the output of the PRG and (ii)~$H \in {\sf Func}_{n,m}(h)$. 
\end{itemize}

\def\O{\mathcal{O}}
\def\T{\mathcal{T}}
\def\func{{\sf Func}}
\def\comp{{\sf Func}_{n,m}}

%\subsection{Introducing Game Based Security Definitions}
Writing everything out explicitly, we have
$$
{\sf Adv}_{\A,G}^{\sf PRG} = \big|\Pr[{\sf PRG}_{\A,G}=1]-\Pr[{\sf Rand}_{\A,G}=1]\big|
$$
where 
${\sf PRG}_{\A,G}$ and ${\sf Rand}_{\A,G}$ are defined as follows:

\noindent\begin{minipage}{\textwidth}
  \centering
\begin{minipage}{.45\textwidth}
    \begin{algorithm}[H]
    \renewcommand{\thealgorithm}{}
    \floatname{algorithm}{}
    \begin{algorithmic}[1]
    \State $H\leftarrow \mathcal{O}$
    \State $s\leftarrow \{0,1\}^{k}$
    \State $(g,\tau):=G^{H}(s)$
    \State  $b\leftarrow\A^{\ket{H}}(g)$ 
   % \State $b\leftarrow \A^{\ket{H}}(g)$ %$ \quad $ \fbox{$b\leftarrow  \B^{H}(g)$}
    \State return $b$
    \end{algorithmic}
      \caption{${\sf PRG}_{\A,G}$} %\quad $\fbox{${\sf PRG}_{\B,G}$}}
    \label{alg:QPRG}
\end{algorithm} 
  \end{minipage}
\begin{minipage}{.45\textwidth}
    \begin{algorithm}[H]
    \renewcommand{\thealgorithm}{}
    \floatname{algorithm}{}
    \begin{algorithmic}[1]
    \State $H\leftarrow \mathcal{O}$
    \State $g\leftarrow \{0,1\}^{\ell}$
    \State $b\leftarrow \A^{\ket{H}}(g)$ 
    \State return $b$
    \end{algorithmic}
      \caption{${\sf Rand}_{\A,G}$} %\quad $\fbox{${\sf Rand}_{\B,G}$}}
    \label{alg:QRand}
\end{algorithm} 
  \end{minipage}
\end{minipage}
\vskip.8\baselineskip
\noindent

\noindent
We also define versions of these experiments where we condition on a particular value~$g$:

\noindent\begin{minipage}{\textwidth}
  \centering
\begin{minipage}{.45\textwidth}
     \begin{algorithm}[H]
     \renewcommand{\thealgorithm}{}
     \floatname{algorithm}{}
    \begin{algorithmic}[1]
    \State $H\leftarrow \mathcal{O}_{g}$
    \State $b\leftarrow\A^{\ket{H}}(g)$ 
    \State return $b$
    \end{algorithmic}
      \caption{${\sf PRG}_{\A,G}(g)$} %\quad $ \fbox{${\sf PRG}_{\B,G}(g)$}}
    \label{alg:QPRG_g}
\end{algorithm} 
  \end{minipage}
  \begin{minipage}{.45\textwidth}
    \begin{algorithm}[H]
    \renewcommand{\thealgorithm}{}
    \floatname{algorithm}{}
    \begin{algorithmic}[1]
    \State $H\leftarrow \mathcal{O}$
    \State $b\leftarrow \A^{\ket{H}}(g)$ 
    \State return $b$
    \end{algorithmic}
      \caption{${\sf Rand}_{\A,G}(g)$}
    \label{alg:QRand_g}
\end{algorithm} 
  \end{minipage}
\end{minipage}
\vskip.8\baselineskip
\noindent
It is immediate that 
\[\Pr[{\sf PRG}_{\A,G}=1] = \Pr[(H, s, g, \tau) \leftarrow \mathcal{G}: {\sf PRG}_{\A,G}(g)=1].\]

In Figure~\ref{fig:ClasssicalDist} we describe an algorithm $\B$ that makes only \emph{classical} queries to~$H$. 
On input $g$, algorithm~$\B$ first runs a subroutine ${\sf findTranscript}^{H}$ that, intuitively, results in $\B$ querying $H$ on all points ``likely'' to appear in the transcript of $G$'s interaction with~$H$ (conditioned on $g$ being the output of~$G$).
The output $h$ of ${\sf findTranscript}^{H}$ contains
all query/answer pairs that $\B$ thus learns.
$\B$ then chooses a random function $H'$ consistent with the partial function~$h$, and runs $\A$ with input~$g$ and access to~$H'$. (Note that this can be done with no further queries to~$H$.)
Finally, $\B$ outputs whatever $\A$ does.

Our aim is to show that the distribution of the output of $\B^H(g)$ is close to the distribution of the output of $\A^H(g)$ regardless of whether $g$ is a uniform string or whether $g$ was the output of $G^H(s)$ on a uniform seed~$s$. This is easy to show when $g$ is uniform, as in that case $g$ is uncorrelated with the random oracle~$H$ and so the distinguishing advantage of $\A$ is unchanged regardless of whether it interacts with oracle $H$ or an independent random oracle~$H'$.

The more challenging case is when $g$ is the output of the PRG. Intuitively we can still resample the values of the oracle which were not queried by the PRG to produce $g$, as the computation of the PRG is independent of these values. This leaves us with the problem of simulating the values of the oracle which are now correlated with $g$. For each point on which the oracle $H$ was queried by $G$, we can categorize the point based on if it had a high (non negligible) probability of being queried by $G$, where the probability is over the randomness of the PRG input $s$ as well as the randomness over the previous oracle values which were queried. To handle points which are likely to be queried, our distinguisher simply queries all points which have a sufficiently large probability of showing up in a transcript between the oracle and the PRG. This process is formalized in Algorithm~\ref{alg:findTranscript}, and we prove its efficiency in Lemma~\ref{lemma:findTranscriptProof}. To handle the remaining points in the transcript which are not likely to have been queried, we rely on an oracle reprogramming result due to Alagic et al. \cite{alagic_post-quantum_2022} (Lemma~\ref{lemma:repro1}), which states that if an oracle is reprogrammed by some randomized process then a quantum distinguisher interacting with either the original oracle or the reprogrammed oracle will not notice the reprogramming provided that no particular point is very likely to be reprogrammed. Viewing the randomized reprogramming process as sampling a transcript between the oracle and PRG, and then only outputting the points which were not likely to have been queried, we can argue that the distinguisher will not notice if we resample these points randomly. 

\noindent\begin{minipage}{\textwidth}
  \centering
  \begin{minipage}{.41\textwidth}
     \begin{algorithm}[H]
    \begin{algorithmic}[1]
    \If{$\not \exists s,H$ such that $G^{H}(s)=g$}
    \State return $0$ 
    \EndIf 
    \State $\delta = \left(\frac{{\sf Adv}_{\A,G}^{\sf PRG}}{6\cdot Q_{\A}}\right)^{2}$
    \State $h={\sf findTranscript}^{H}(g,\delta)$
    \IIf{$h=\perp$} return 0 \EndIIf
    \State $H'\leftarrow {\sf Func}_{n,m}(h)$
    \State $b\leftarrow \A^{\ket{H'}}(g)$
    \State return $b$
    \end{algorithmic}
      \caption{${\B}^{H}(g)$}
    \label{alg:B}
\end{algorithm} 
  \end{minipage}
  \begin{minipage}{.51\textwidth}
     \begin{algorithm}[H]
    \begin{algorithmic}[1]
    \State ${\sf limit}=\ceil*{\frac{-\ln(\delta)\cdot 4Q_{G}^{2}}{\delta^{2}}}+1$
    \State $i = 0$, $h = \emptyset$, $\epsilon = 1$
    \While{$\epsilon>\delta $ and $i<{\sf limit}$ }
    \State $i = i + 1$
    \IIf{$\mathcal{T}_{g,h}$ is undefined} return $\perp$\EndIIf
    \State $x' = \arg\max_{x}\{\Pr_{\tau\leftarrow \mathcal{T}_{g,h}}[(x,\star) \in \tau]\}$
    \State Query $H$ on $x'$ and add $(x',H(x'))$ to $h$.
    \State $\epsilon = \max_{x}\{\Pr_{\tau\leftarrow \mathcal{T}_{g,h}}[(x,\star) \in \tau]\}$
    \EndWhile 
    \IIf{$i\geq {\sf limit}$} return $\perp$ \EndIIf
    \State return $h$
    \end{algorithmic}
      \caption{${\sf findTranscript}^{H}(g,\delta)$}
    \label{alg:findTranscript}
\end{algorithm} 
  \end{minipage}
  \captionof{figure}{}
  \label{fig:ClasssicalDist}
\end{minipage}
\vskip.8\baselineskip
\noindent

%\subsection{Proof of PRG Lifting Theorem}

We now give a more technical sketch of the proof. As just discussed, the bulk of the technical work is in showing that $\B^{(\cdot)}(g)$ (Algorithm~\ref{alg:B}) correctly simulates the behavior of the quantum algorithm in the experiment ${\sf PRG}_{\A,G}$ (Lemma~\ref{lemma:mainLemma}). Since the input $g$ given to the distinguisher is fixed in ${\sf PRG}_{\A,G}(g)$, the distribution over outputs from the distinguisher only depends on the distribution from which the oracle is drawn. With this in mind, our goal will be to find a distribution $\mathcal{O}_{g}'$ over oracles such that~(1) the output of $\A$ is distributed almost identically when given an oracle from $\mathcal{O}'_{g}$ or from $\mathcal{O}_{g}$, and~(2) there is some way of sampling an oracle from $\mathcal{O}'_{g}$ without making a large number of queries to the original oracle. 

We now introduce three hybrid experiments, each of which with a slightly different method of generating the oracle that $\A$ interacts with. 
\begin{itemize}
    \item ${\sf Hyb}_1$: In this experiment we sample an oracle $H\leftarrow \mathcal{O}_{g}$, run ${\sf findTranscript}^{H}(g,\delta)$ on that oracle to obtain a partial function $h$, and then resample the remaining values of the oracle that have not been queried by ${\sf findTranscript}$ from the distribution $\mathcal{O}_{g,h}$. 
    \item ${\sf Hyb}_2$: This experiment replaces the sampling procedure $H\leftarrow \mathcal{O}_{g,h}$ with a procedure in which we sample a transcript $\tau\leftarrow \mathcal{T}_{g,h}$, extend the domain of the partial function $h$ by uniformly generating the remaining values to produce $H_0\leftarrow {\sf Func}_{n,m}(h)$. Finally we reprogram $H_0$ according to the partial function $\tau\setminus h$ to produce $H_1=H_{0}^{(\tau\setminus h)}$. In Lemma~\ref{lemma:transition1} we prove that this procedure is equivalent to the one in ${\sf Hyb}_1$. 
    \item ${\sf Hyb}_3$: In this experiment, we omit the final reprogramming step of ${\sf Hyb}_2$ and give $\A$ access to $H_0$ as defined above. To compare ${\sf Hyb}_2$ and ${\sf Hyb}_3$, note that the task of distinguishing $H_0$ from $H_1$ matches the reprogramming game of Lemma~\ref{lemma:repro1}, where the randomized reprogramming process from the lemma outputs the partial function~$\tau\setminus h$. Lemma~\ref{lemma:repro1} gives a bound on distinguishing $H_0$ and $H_1$ based on the largest probability of any particular point showing up in the domain of $\tau\setminus h$. In Lemma~\ref{lemma:findTranscriptProof} we obtain such a bound by showing that with high probability over the partial function $h$ returned by ${\sf findTranscript}$, no point has a high probability of showing up in the domain of~$\tau\setminus h$. 
\end{itemize}

We now turn to the proof. 
We rely on the following result of Austrin et al.~\cite[Lemma~3.6]{quantumKeyAgreement}.

\begin{lemma}
\label{lemma:heavyQuery}
    Let $L, z_1,x_1,..., z_q, x_q$ be a finite sequence of correlated random variables, where $L$ ranges over subsets of a universe $\mathcal{U}$ and %that are correlated with $L$, and we have 
    $x_i \in \mathcal{U}$ for all~$i$, and with the property that if $i\neq j$ then $x_i\neq x_j$. Conditioned on a sequence $z_1,x_1,...,z_q$, say that $x_i$ is $\delta$-heavy if $\Pr[x_i \in L\mid z_1,x_1,...,z_i]\geq \delta$ and, for the same sequence, define $\mathcal{S}=\{x_i\mid x_i \text{ is }\delta\text{-heavy}\}$. Then, $\mathbb{E}[|S|]\leq \mathbb{E}[|L|]/\delta$.
\end{lemma}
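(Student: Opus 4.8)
The plan is to run a short Markov argument on each query, with the only real care needed in the measure-theoretic bookkeeping around the conditioning. For $i=1,\dots,q$ let $\mathcal{F}_i$ denote the information contained in the prefix $z_1,x_1,\dots,z_i$ (note that $\mathcal{F}_i$ does \emph{not} include $x_i$), and set $p_i:=\Pr[x_i\in L\mid \mathcal{F}_i]$, a random variable measurable with respect to $\mathcal{F}_i$. By definition, $x_i$ is $\delta$-heavy precisely on the event $\{p_i\ge\delta\}$; in particular ``$x_i$ is $\delta$-heavy'' is an $\mathcal{F}_i$-measurable event, a property of the conditional distribution rather than of the realized value of $x_i$. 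Because the hypothesis forces the $x_i$ to be pairwise distinct, the heavy ones are distinct as well, so $|S|=\sum_{i=1}^{q}\mathbbm{1}[\,x_i\text{ is }\delta\text{-heavy}\,]$ and hence $\mathbb{E}[|S|]=\sum_{i=1}^{q}\Pr[p_i\ge\delta]$.

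Next I would bound each term by Markov's inequality: since $p_i\ge 0$, we have $\Pr[p_i\ge\delta]\le \mathbb{E}[p_i]/\delta$, and the tower property gives $\mathbb{E}[p_i]=\mathbb{E}\big[\Pr[x_i\in L\mid \mathcal{F}_i]\big]=\Pr[x_i\in L]$. Summing over $i$,
\[
\mathbb{E}[|S|]\;\le\;\frac{1}{\delta}\sum_{i=1}^{q}\Pr[x_i\in L]\;=\;\frac{1}{\delta}\,\mathbb{E}\!\left[\sum_{i=1}^{q}\mathbbm{1}[\,x_i\in L\,]\right].
\]
Finally, invoking pairwise distinctness once more: the indices $i$ with $x_i\in L$ correspond to distinct elements of $L$, so $\sum_{i=1}^{q}\mathbbm{1}[x_i\in L]\le |L|$ holds pointwise, and taking expectations yields $\mathbb{E}[|S|]\le \mathbb{E}[|L|]/\delta$, as claimed.

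I do not anticipate a genuine obstacle here, as the statement is an averaging/Markov argument and little else. The two points that deserve attention are (i) treating ``$x_i$ is $\delta$-heavy'' as an $\mathcal{F}_i$-measurable event so that the tower property applies cleanly, and (ii) using the distinctness hypothesis exactly where it is needed — once to write $|S|$ as a sum of indicators over heavy indices, and once at the end to ensure that no single element of $L$ is charged by more than one index.
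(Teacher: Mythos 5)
The paper does not actually prove this lemma --- it is imported without proof from Austrin et al.\ (their Lemma~3.6) --- so there is no in-paper argument to compare against, but your proof is correct and self-contained. The key points check out: the event $\{p_i \ge \delta\}$ is measurable with respect to the prefix $z_1,x_1,\dots,z_i$, so per-index Markov plus the tower property gives $\Pr[p_i\ge\delta]\le \Pr[x_i\in L]/\delta$, and the pairwise-distinctness hypothesis is invoked exactly where it is needed (to count $|S|$ by heavy indices and to bound $\sum_i \mathbbm{1}[x_i\in L]\le |L|$ pointwise); this is the standard averaging proof of the heavy-query lemma.
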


We start by proving an important property about the ${\sf findTranscript}$ subroutine.

\begin{lemma}
\label{lemma:findTranscriptProof}
    Let $\delta \in (0,1)$ and $g \in \bool^\ell$. Then with probability at least $1-\delta$ (over the randomness of the oracle $H$ drawn from $\mathcal{O}_{g}$), ${\sf findTranscript}^{H}(g,\delta)$ returns a partial function $h$ with domain of size at most $\frac{Q_{G}}{\delta^{2}}$ such that 
    $$
        \max_{x\not \in D_{h}}\left\{\Pr_{\tau \leftarrow \mathcal{T}_{g,h}}[(x,\star) \in \tau]\right\}\leq \delta.
    $$    
\end{lemma}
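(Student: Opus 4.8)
\medskip\noindent\textbf{Approach.}
The plan is to first reduce the statement to a tail bound on the number of iterations executed by the while-loop of ${\sf findTranscript}$, and then to establish that tail bound via Lemma~\ref{lemma:heavyQuery}. Two observations make the reduction work. First, when ${\sf findTranscript}^H(g,\delta)$ is run on $H\leftarrow\mathcal{O}_g$ there is a seed $s$ with $G^H(s)=g$; since the subroutine queries $H$ truthfully, this pair witnesses that $H$ is consistent with every partial function $h$ it builds, so $\mathcal{T}_{g,h}$ is never undefined along the run and the only way to output $\perp$ is to reach $i={\sf limit}$. Second, if ${\sf findTranscript}$ returns $h\neq\perp$ then the loop exited with $\epsilon\le\delta$, i.e.\ $\max_{x\notin D_h}\Pr_{\tau\leftarrow\mathcal{T}_{g,h}}[(x,\star)\in\tau]\le\delta$, so the inequality in the conclusion is automatic once $h\neq\perp$. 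Writing $I$ for the number of iterations (equivalently $|D_h|$, since each iteration adds one fresh point), it then suffices to show $\Pr_{H\leftarrow\mathcal{O}_g}[\,I>Q_G/\delta^2\,]\le\delta$: on the complementary event we have $I\le Q_G/\delta^2<{\sf limit}$, so the loop halts because $\epsilon\le\delta$ rather than because of the cap, and the subroutine returns $h\neq\perp$ with $|D_h|=I\le Q_G/\delta^2$ and the stated max-inequality.

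\medskip\noindent\textbf{Reduction to Lemma~\ref{lemma:heavyQuery}.}
I would take the joint distribution obtained by drawing $(H,s,g',\tau)\leftarrow\mathcal{G}$ conditioned on $g'=g$, and set $L:=D_\tau$, the set of points queried by $G$ in that execution; since $G$ makes exactly $Q_G$ queries, $|L|\le Q_G$ always, so $\mathbb{E}[|L|]\le Q_G$. In the run of ${\sf findTranscript}$, the $j$-th queried point $x_j$ and its answer $z_j:=H(x_j)$ are deterministic functions of $H$, and $x_j$ depends on $H$ only through the history $h_{j-1}=\{(x_1,z_1),\dots,(x_{j-1},z_{j-1})\}$ (the $\arg\max$ defining $x_j$ is a function of $h_{j-1}$ alone). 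Hence the tuple $(x_1,z_1,\dots,x_{j-1},z_{j-1})$ carries exactly the information in $(z_1,\dots,z_{j-1})$, and conditioning on it is the same as conditioning on ``$H$ consistent with $h_{j-1}$'', under which the marginal law of $\tau$ is by definition $\mathcal{T}_{g,h_{j-1}}$, so
$$
\Pr\big[\,x_j\in L \,\big|\, x_1,z_1,\dots,x_{j-1},z_{j-1}\,\big]=\Pr_{\tau\leftarrow\mathcal{T}_{g,h_{j-1}}}\big[(x_j,\star)\in\tau\big].
$$
I would then invoke Lemma~\ref{lemma:heavyQuery} with this $L$ and the correlated sequence $z'_1,x'_1,z'_2,x'_2,\dots$ defined by $x'_j:=x_j$ and $z'_j:=z_{j-1}$ for $j\ge2$ (with $z'_1$ a dummy), truncated to length ${\sf limit}$ (extending the subroutine to keep querying fresh points if it stops early, which keeps the $x_j$ distinct). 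The index shift is chosen so that the lemma's conditioning ``on $z'_1,x'_1,\dots,z'_i$'' becomes conditioning on $(z_1,\dots,z_{i-1})$; by the displayed identity, $x'_i$ is then $\delta$-heavy in the lemma's sense precisely when $\Pr_{\tau\leftarrow\mathcal{T}_{g,h_{i-1}}}[(x_i,\star)\in\tau]\ge\delta$.

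\medskip\noindent\textbf{Finishing.}
Whenever ${\sf findTranscript}$ actually runs its $i$-th iteration we have $\epsilon_{i-1}=\max_{x\notin D_{h_{i-1}}}\Pr_{\tau\leftarrow\mathcal{T}_{g,h_{i-1}}}[(x,\star)\in\tau]>\delta$ and $x_i$ attains this maximum, so $x_i$ is $\delta$-heavy; thus $I\le|S|$ for the $\delta$-heavy set $S$ of Lemma~\ref{lemma:heavyQuery}, which gives $\mathbb{E}[|S|]\le\mathbb{E}[|L|]/\delta\le Q_G/\delta$, and Markov's inequality then yields $\Pr[\,I>Q_G/\delta^2\,]\le\Pr[\,|S|>Q_G/\delta^2\,]\le(Q_G/\delta)/(Q_G/\delta^2)=\delta$, as required.

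\medskip\noindent\textbf{Main obstacle.}
The delicate step is the passage from the adaptive behaviour of ${\sf findTranscript}$ — whose $j$-th query is chosen as a function of the already-revealed oracle values — to the static conditioning ``$H$ consistent with $h_{j-1}$'' that makes $\mathcal{T}_{g,h_{j-1}}$ the right conditional law of $\tau$, together with aligning the indices against the exact conditioning convention of Lemma~\ref{lemma:heavyQuery}. Once that correspondence is pinned down, everything else is the expectation bound of that lemma plus Markov; the residual bookkeeping (that $Q_G/\delta^2<{\sf limit}$ for the $\delta$ to which the lemma is applied, that the queried points stay distinct, and that the process extends to the fixed length needed for Lemma~\ref{lemma:heavyQuery}) is routine.
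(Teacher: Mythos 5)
Your proposal is correct and follows essentially the same route as the paper's proof: take $L$ to be the set of points $G$ queries (so $\mathbb{E}[|L|]\leq Q_G$), note that every point ${\sf findTranscript}$ queries is $\delta$-heavy with respect to the evolving history, apply Lemma~\ref{lemma:heavyQuery} to get $\mathbb{E}[|\mathcal{S}|]\leq Q_G/\delta$, conclude via Markov that the domain exceeds $Q_G/\delta^2$ with probability at most $\delta$, and obtain the max-inequality from the loop's termination condition. Your treatment is in fact somewhat more careful than the paper's, which leaves implicit the points you spell out (that $\mathcal{T}_{g,h}$ is always defined when $H\leftarrow\mathcal{O}_g$, that the cap ${\sf limit}$ exceeds $Q_G/\delta^2$ so the good event excludes the $\perp$ output, and the exact conditioning correspondence with the lemma's sequence).
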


%\jnote{I'm not sure I follow the proof. $\delta$-heavy is defined based on the entire sequence $z_1, x_1, \ldots$, but ${\sf findTranscript}$ makes the $i$th query only based on the sequence up to $z_{i-1}$} \bnote{I don't think $\delta$-heavy is defined based on the entire sequence. The requirement stated in the lemma above is that $\Pr[x_i \in L\mid z_1,x_1,...,z_i]\geq \delta$.} \jnote{Do you understand what that means? It is not clear what event is being conditioned on} \bnote{I think we are conditioning on particular values for the random variables $z_1,x_1,...,z_{i}$. Would it help if the lemma statement stressed that the sequence can be correlated with $L$?}

\begin{proof}
    The final inequality in the lemma holds by  the termination condition for ${\sf findTranscript}$, so it only remains to prove that the size of the partial function $h$ is at most $Q_{G}^{2}/\delta$. Let $L$  be the random variable describing a transcript $\tau\leftarrow \mathcal{T}_{g}$ of the queries $G$ makes to the random oracle, and let $z_1=g$ be the output of the PRG. For $i\geq 1$, let the random variable $x_i$ be the $i$th query made by ${\sf findTranscript}$ to the random oracle, and let $z_{i+1}$ be the value returned by the oracle. Letting $q$ be the number of queries made by ${\sf findTranscript}$, set $x_{i}=\perp$ and $z_{i+1}=0$ for $q<i\leq 2^{n}$. Every query made by ${\sf findTranscript}$ is $\delta$-heavy, and so  Lemma~\ref{lemma:heavyQuery} implies $\mathbb{E}[|\mathcal{S}|]\leq Q_{G}/\delta$. It follows from Markov's inequality that $\Pr\left[|\mathcal{S}|>Q_{g}/\delta^{2}\right]\leq \delta$. Since the partial function $h=\{(x_{i},z_{i+1})\}_{i=1}^{q}$ returned by ${\sf findTranscript}$ has $\mathcal{S}$ as its domain, the lemma statement follows.
\end{proof}

Now we turn to the proof of Lemma~\ref{lemma:transition1} which justifies the transition from ${\sf Hyb}_{1}(g)$ to~${\sf Hyb}_{2}(g)$.

\begin{lemma} 
\label{lemma:transition1}
Let $g_0 \in \{0,1\}^{\ell}$, and let $h_0$ be a partial function such that there exists $(s,H)$ where $H \in {\sf Func}_{n,m}(h_0)$ and $(g_0,\tau) = G^{H}(s)$ for some $\tau$. Define the following random variables: 
    \[{\sf O}'_{g_0,h_0}:=\left\{\begin{array}{c}
    \tau\leftarrow \mathcal{T}_{g_0,h_0}\\
    H\leftarrow {\sf Func}_{n,m}(h_0)
 \end{array}: H^{(\tau\setminus h_0)}\right\}  \quad \text{ and }\quad {\sf O}_{g_0,h_0}:=\left\{\begin{array}{c}
    H\leftarrow \mathcal{O}_{g_0,h_0}
 \end{array}: H\right\}. \]
 Then ${\sf O}'_{g_0,h_0}$ and ${\sf O}_{g_0,h_0}$ are distributed identically. 
\end{lemma}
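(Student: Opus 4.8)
The plan is to show that the two distributions over functions $H \in {\sf Func}_{n,m}(h_0)$ assign the same probability to every target function $F$, by expanding both sides as sums over the intermediate transcript $\tau$. First I would fix an arbitrary $F \in {\sf Func}_{n,m}$ that is consistent with $h_0$ and is a possible output of ${\sf O}_{g_0,h_0}$ (meaning there exists a seed $s$ with $G^F(s) = g_0$); functions not of this form get probability $0$ on both sides, the left side because $H^{(\tau\setminus h_0)}$ is always consistent with $h_0$ and, on the support of $\mathcal{T}_{g_0,h_0}$, reprogramming by $\tau$ makes it consistent with a transcript that actually arises. The key observation is a bijective reparametrization: a pair $(\tau, H)$ with $\tau \leftarrow \mathcal{T}_{g_0,h_0}$ and $H \leftarrow {\sf Func}_{n,m}(h_0)$ maps to $F := H^{(\tau\setminus h_0)}$, and the preimages of a given $F$ are exactly the pairs $(\tau, H)$ where $\tau$ is a transcript consistent with $F$ and with $g_0$, and $H$ agrees with $F$ off the domain of $\tau\setminus h_0$ (and is unconstrained, i.e.\ uniform, on that domain).

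Concretely, I would compute $\Pr[{\sf O}'_{g_0,h_0} = F]$ as
\[
\Pr[{\sf O}'_{g_0,h_0} = F] \;=\; \sum_{\tau}\; \Pr_{\mathcal{T}_{g_0,h_0}}[\tau]\cdot \Pr_{H \leftarrow {\sf Func}_{n,m}(h_0)}\!\left[H^{(\tau\setminus h_0)} = F\right],
\]
where the sum is over transcripts $\tau$ in the support of $\mathcal{T}_{g_0,h_0}$ that are moreover consistent with $F$ (for other $\tau$ the inner probability is $0$, since $H^{(\tau\setminus h_0)}$ would disagree with $F$ somewhere on $D_\tau$). For such a $\tau$, the event $H^{(\tau\setminus h_0)} = F$ is exactly the event that $H$ agrees with $F$ on $D_H \setminus D_\tau$, which has probability $(2^m)^{-(|D_F| - |D_\tau|)}$ where I write $|D_F| = 2^n$ for the full domain; crucially this does not depend on $\tau$ beyond its domain size, and in fact $|D_\tau| = Q_G$ is fixed since $G$ always makes exactly $Q_G$ queries (and they are distinct). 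Next I would unpack $\Pr_{\mathcal{T}_{g_0,h_0}}[\tau]$: by definition of $\mathcal{T}_{g_0,h_0}$ this is the probability, conditioned on $g_0 = G^{H'}(s)$ and $H' \in {\sf Func}_{n,m}(h_0)$ over uniform $(s, H')$, that the induced transcript equals $\tau$. Similarly $\Pr[{\sf O}_{g_0,h_0} = F]$ is, by definition of $\mathcal{O}_{g_0,h_0}$, the conditional probability that $H' = F$ given the same conditioning event, which decomposes as $\sum_\tau$ of the joint conditional probability that $H' = F$ \emph{and} the transcript is $\tau$; and $\{H' = F\}$ forces the transcript to be the (unique) transcript $\tau_F$ that $G$ produces with oracle $F$ on the relevant seed(s).

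The main obstacle — and the only real content — is bookkeeping the conditioning on the event $E := \{H' \in {\sf Func}_{n,m}(h_0) \text{ and } \exists s: G^{H'}(s) = g_0\}$ correctly so the normalizing factor $\Pr[E]$ cancels between the two sides, and handling the fact that a given $(g_0, F)$ may be produced by several seeds $s$ (so "the" transcript $\tau$ is really a distribution over seeds as well). I would absorb the seed into the probability space from the start: work with the joint distribution on $(s, H')$, let $E$ be as above, and note both ${\sf O}'$ and ${\sf O}$ are functions of a sample from this space conditioned on $E$. Then the identity $\Pr[{\sf O}' = F] = \Pr[{\sf O} = F]$ reduces, after multiplying through by $\Pr[E]$, to the claim that for every $F$ consistent with $h_0$,
\[
\sum_{\tau \,\text{cons. with } F} \Pr[\text{transcript} = \tau \wedge E] \cdot (2^m)^{-(2^n - Q_G)} \;=\; \Pr[H' = F \wedge E],
\]
and I would verify this by the uniform-measure counting above: on the event $E$, the pair $(\text{seed-induced transcript } \tau, \text{ values of } H' \text{ outside } D_\tau)$ determines $H'$, the outside values are uniform and independent of $\tau$ given $E$ (since $G$'s computation only reads $D_\tau$), hence $\Pr[H' = F \wedge \text{transcript} = \tau \wedge E] = \Pr[\text{transcript} = \tau \wedge E]\cdot (2^m)^{-(2^n - Q_G)}$, and summing over $\tau$ consistent with $F$ gives the right-hand side. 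I expect the write-up to be short once the "transcript $\times$ complement values" factorization is stated cleanly; the fixed query count $Q_G$ is what makes the exponent $\tau$-independent, which is the one place the hypothesis on $G$ is used.
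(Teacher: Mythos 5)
Your overall route is the same as the paper's: expand both distributions over the transcript $\tau$ and argue that, conditioned on $(g_0,\tau)$ and on consistency with $h_0$, the oracle is uniform over ${\sf Func}_{n,m}(h_0\cup\tau)$ because $G$'s computation reads only the points of $D_\tau$ (the paper packages this as the fact that ${\sf Good}_{g_0,\tau}(H)$ depends only on~$\tau$). However, the step you yourself single out as ``the only real content''---the bookkeeping---contains a genuine error. Sampling $H\leftarrow {\sf Func}_{n,m}(h_0)$ already fixes $H$ on $D_{h_0}$, and likewise the conditioning event $E$ forces $H'$ to agree with $h_0$ there. Hence, for $\tau$ consistent with $F$, the inner probability $\Pr_{H\leftarrow {\sf Func}_{n,m}(h_0)}[H^{(\tau\setminus h_0)}=F]$ equals $(2^m)^{-(2^n-|D_{h_0}\cup D_\tau|)}$, not $(2^m)^{-(2^n-|D_\tau|)}$, and similarly $\Pr[H'=F\wedge \text{transcript}=\tau\wedge E]=\Pr[\text{transcript}=\tau\wedge E]\cdot (2^m)^{-(2^n-|D_{h_0}\cup D_\tau|)}$. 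Your displayed identity with the factor $(2^m)^{-(2^n-Q_G)}$ is therefore false whenever $D_{h_0}\not\subseteq D_\tau$---which is the typical situation in the application, where $h_0$ is the output of ${\sf findTranscript}$ and may contain up to $Q_G/\delta^2$ points, most of which need not occur in any particular transcript (you also implicitly assume the $Q_G$ queries are distinct, which the paper never does).

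The fix is short and actually removes the crutch you lean on: with the corrected exponent, $\Pr[\text{transcript}=\tau\wedge E]=\frac{|{\sf Good}_{g_0,\tau}|}{2^k}\,(2^m)^{-|D_{h_0}\cup D_\tau|}$, so each summand becomes $\frac{|{\sf Good}_{g_0,\tau}|}{2^k}\,(2^m)^{-2^n}$; the $\tau$-dependent factors cancel exactly, and summing $|{\sf Good}_{g_0,\tau}|$ over transcripts consistent with $F$ recovers $\Pr_s[G^F(s)=g_0]$, which is the right-hand side. In particular, the ``key'' you identify---that $|D_\tau|=Q_G$ is fixed, making the exponent $\tau$-independent---is neither correct as used nor needed; what is needed is precisely the only-reads-$D_\tau$ property, i.e., that the set of seeds producing $(g_0,\tau)$ is the same for every oracle consistent with $\tau$, which is the step the paper isolates via the ${\sf Good}_{g_0,\tau}$ sets. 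The paper avoids the explicit counting by showing any two oracles in ${\sf Func}(h_0\cup\tau)$ are equiprobable under the conditioning; your counting version goes through once the exponent is corrected.
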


\begin{proof}
Let ${\mathcal{U}}$ be the uniform distribution over oracle-seed combinations in ${\sf Func}_{n,m}\times \bool^{k}$, and let $R_{g,\tau,h}$ be a relation over oracle-seed pairs such that 
$$
(H,s) \in R_{g,\tau,h} \iff H \in {\sf Func}(h) \land (g,\tau) = G^{H}(s).
$$
First, we can write the probability of ${\sf O}'_{g_0,h_0}$ generating a particular function $H$ as
\begin{align}
    \Pr[{\sf O}'_{g_0,h_0}=H]&=\sum_{\tau}\Pr[\tau \leftarrow \mathcal{T}_{g_0,h_0}]\cdot \Pr_{{\sf O}\leftarrow {\sf Func}_{n,m}(h_0\cup \tau)}[{\sf O}=H]. \label{eq:OracleExpression1}
\end{align}
On the other hand, we can write the probability of the random variable ${\sf O}_{g_0,h_0}$ generating some function $H$ as
\begin{align}
    \Pr[{\sf O}_{g_0,h_0}=H]&=\Pr_{{\sf O},{\sf S}\leftarrow \mathcal{U}}[{\sf O}\in {\sf Func}(H\setminus h_0)\mid {\sf O}\in {\sf Func}(h_0), (g_0,\star)=G^{\sf O}({\sf S})]\nonumber\\
    &=\sum_{\tau}\Pr[\tau\leftarrow \mathcal{T}_{g_0,h_0}]\cdot \Pr_{{\sf O},{\sf S}\leftarrow \mathcal{U}}[{\sf O}\in {\sf Func}(H\setminus h_0)\mid {\sf O}\in {\sf Func}(h),(g_0,\tau) = G^{\sf O}({\sf S})]\nonumber\\
    &=\sum_{\tau}\Pr[\tau\leftarrow \mathcal{T}_{g_0,h_0}]\cdot \Pr_{{\sf O},{\sf S}\leftarrow \mathcal{U}}[{\sf O}\in {\sf Func}(H\setminus h_0)\mid ({\sf O},{\sf S}) \in R_{g_0,\tau,h_0}]\label{eq:OracleExpression2}
\end{align}
Comparing Equation~(\ref{eq:OracleExpression1}) above with Equation~(\ref{eq:OracleExpression2}), we will prove our result by showing that for any $\tau$ that is consistent with $h_0$,
$$
\Pr_{{\sf O},{\sf S}\leftarrow \mathcal{U}}[{\sf O}\in {\sf Func}\left(H\setminus h_0\right)\mid ({\sf O},{\sf S}) \in R_{g_0,\tau,h_0}] = \Pr_{{\sf O}\leftarrow {\sf Func}_{n,m}(h_0\cup \tau)}[{\sf O}=H].
$$
To prove the above, we will show that for any two oracles $H_0,H_1 \in {\sf Func}(h_0\cup \tau)$,
$$
\Pr_{{\sf O},{\sf S}\leftarrow \mathcal{U}}[{\sf O} = H_0\mid ({\sf O},{\sf S})\in R_{g_0,\tau,h_0}]=\Pr_{{\sf O},{\sf S}\leftarrow \mathcal{U}}[{\sf O} = H_1\mid ({\sf O},{\sf S})\in R_{g_0,\tau,h_0}].
$$
Let $\tau$ be any partial function consistent with $h_0$, and consider two oracle seed pairs $(H_0,s_0)$ and $(H_1,s_1)$ such that both pairs are in the relation $R_{g_0,\tau,h_0}$. Then we have 
$$
\Pr_{{\sf O},{\sf S}\leftarrow \mathcal{U}}[({\sf O},{\sf S})=(H_{0},s_{0})\mid ({\sf O},{\sf S}) \in R_{g_0,\tau,h_0}] = \Pr_{{\sf O},{\sf S}\leftarrow \mathcal{U}}[({\sf O},{\sf S})=(H_{1},s_{1})\mid ({\sf O},{\sf S}) \in R_{g_0,\tau,h_0}].
$$
Define the following set:
$$
{\sf Good}_{g_0,\tau}(H)=\{s\mid G^{H}(s)=(g_0,\tau)\}.
$$
Since $H_{0},H_{1} \in {\sf Func}(\tau)$, we have that ${\sf Good}_{g_0,\tau}(H_{0})={\sf Good}_{g_0,\tau}(H_{1})$. To see this, note that since the execution of $G$ on input $s \in {\sf Good}_{g_0,\tau}(H_0)$ with oracle $H_0$ is independent of oracle values outside of $\tau$, it follows that the execution on $G$ with oracle $H_1$ will be identical. It follows that 
\begin{align*}
    \Pr_{{\sf O},{\sf S}\leftarrow \mathcal{U}}[{\sf O}=H_{0}\mid ({\sf O},{\sf S}) \in R_{g_0,\tau,h_0}]&=\sum_{s \in {\sf Good}_{g_0,\tau}(H_{0})}\Pr_{{\sf O},{\sf S}\leftarrow \mathcal{U}}[({\sf O},{\sf S})=(H_{0},s)\mid ({\sf O},{\sf S}) \in R_{g_0,\tau,h_0}]\\
    & = \sum_{s \in {\sf Good}_{g_0,\tau}(H_{1})}\Pr_{{\sf O},{\sf S}\leftarrow \mathcal{U}}[({\sf O},{\sf S})=(H_{1},s)\mid ({\sf O},{\sf S}) \in R_{g_0,\tau,h_0}]\\
    & = \Pr_{{\sf O},{\sf S}\leftarrow \mathcal{U}}[{\sf O}=H_{1}\mid ({\sf O},{\sf S}) \in R_{g_0,\tau,h_0}].
\end{align*}
This concludes the proof.
\end{proof}

With the above two results, we are now ready to proceed to the main proof. We start by showing that our classical distinguisher $\B^{H}$ matches the performance of the quantum distinguisher $\A^{\ket{H}}$ when interacting with a PRG.

\begin{lemma}
\label{lemma:mainLemma}
Let $\A$ be a quantum algorithm making $Q_{\A}$ quantum oracle queries, and let $\B^{H}$ be as described in Algorithm~\ref{alg:B}. Then, 
    $$
        \big|\Pr[{\sf PRG}_{\A,G}=1]-\Pr[{\sf PRG}_{\B,G}=1]\big|\leq  \frac{1}{2}{\sf Adv}_{\A,G}^{\sf PRG}.
    $$
\end{lemma}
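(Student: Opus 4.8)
The plan is to prove the lemma through the sequence of hybrid experiments ${\sf Hyb}_1(g)$, ${\sf Hyb}_2(g)$, ${\sf Hyb}_3(g)$ sketched above, conditioning throughout on a fixed value $g$ in the range of $G$, and then averaging over $g \leftarrow \mathcal{G}$ at the end. First I would set up the conditional statement: for each fixed $g$ in the range of the PRG, I want to bound $\big|\Pr[{\sf PRG}_{\A,G}(g)=1] - \Pr[{\sf PRG}_{\B,G}(g)=1]\big|$, where the latter denotes running $\B$ on the fixed input $g$ with $H \leftarrow \mathcal{O}_g$. Since $g$ is fixed in these conditional experiments, the only thing that varies is the distribution of the oracle handed to $\A$, so it suffices to compare the oracle distributions. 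Note $\Pr[{\sf PRG}_{\A,G}(g)=1]$ uses $H \leftarrow \mathcal{O}_g$ directly.

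Next I would walk the chain. The transition from ${\sf PRG}_{\A,G}(g)$ to ${\sf Hyb}_1(g)$ is exact: conditioned on the partial function $h$ output by ${\sf findTranscript}^H(g,\delta)$, resampling the un-queried oracle values from $\mathcal{O}_{g,h}$ yields exactly the distribution $\mathcal{O}_g$ again (the pair $(H,h)$ has the same joint law), so $\Pr[{\sf Hyb}_1(g)=1] = \Pr[{\sf PRG}_{\A,G}(g)=1]$ — except on the probability-$\le\delta$ event (by Lemma~\ref{lemma:findTranscriptProof}) that ${\sf findTranscript}$ returns $\perp$ because it hits ${\sf limit}$; when that happens $\B$ outputs $0$, costing at most $\delta$ in the bound. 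The transition ${\sf Hyb}_1(g) \to {\sf Hyb}_2(g)$ is exactly Lemma~\ref{lemma:transition1} applied with $h_0 = h$: the two oracle-generating procedures are identically distributed, so these hybrids are equal (for every fixed $h$, hence overall). The transition ${\sf Hyb}_2(g) \to {\sf Hyb}_3(g)$ is the only lossy step and is handled by the Reprogramming lemma (Lemma~\ref{lemma:repro1}): fixing $h$, the distinguisher $\mathcal{D}$ outputs $F_0 = H_0 \leftarrow {\sf Func}_{n,m}(h)$ and the randomized reprogrammer $\mathcal{B}$ that samples $\tau \leftarrow \mathcal{T}_{g,h}$ and outputs $R = \tau \setminus h$; then $\A$ with access to $F_b$ is exactly ${\sf Hyb}_3$ when $b=0$ and ${\sf Hyb}_2$ when $b=1$. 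The reprogramming bound is $2 Q_\A \sqrt{\epsilon}$ where $\epsilon = \max_x \Pr_{\tau \leftarrow \mathcal{T}_{g,h}}[x \in D_{\tau \setminus h}] = \max_{x \notin D_h}\Pr_{\tau\leftarrow\mathcal{T}_{g,h}}[(x,\star)\in\tau]$, and by Lemma~\ref{lemma:findTranscriptProof} this is $\le \delta$ except with probability $\le \delta$ over $h$. Finally, ${\sf Hyb}_3(g)$ is precisely ${\sf PRG}_{\B,G}(g)$: in both, $\A$ is run on $g$ with a uniformly random function consistent with $h$ (in $\B$ we write $H' \leftarrow {\sf Func}_{n,m}(h)$, which is the same as sampling $H_0$), conditioned on the non-$\perp$ event.

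Assembling: for each fixed $g$, $\big|\Pr[{\sf PRG}_{\A,G}(g)=1] - \Pr[{\sf PRG}_{\B,G}(g)=1]\big| \le \delta + \delta + 2Q_\A\sqrt{\delta} \le 4 Q_\A \sqrt{\delta}$ (using $\delta \le 1$ and $Q_\A \ge 1$, so $\delta \le Q_\A\sqrt\delta$), and with $\delta = ({\sf Adv}_{\A,G}^{\sf PRG}/(6Q_\A))^2$ this is $4 Q_\A \cdot {\sf Adv}_{\A,G}^{\sf PRG}/(6Q_\A) = \tfrac{2}{3}{\sf Adv}_{\A,G}^{\sf PRG}$ — I'd tighten the constant bookkeeping to land at $\tfrac12$, which the choice of $\delta$ is designed to give (the three contributions are $\delta$, $\delta$, and $2Q_\A\sqrt\delta = {\sf Adv}/3$; being slightly more careful, two of the three terms are lower-order, and the intended clean bound $\tfrac12{\sf Adv}$ follows). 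Averaging over $g \leftarrow \mathcal{G}$ (using $\Pr[{\sf PRG}_{\A,G}=1] = \mathbb{E}_g[\Pr[{\sf PRG}_{\A,G}(g)=1]]$ and likewise for $\B$) and applying the triangle inequality to the expectation preserves the bound, giving the lemma.

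The main obstacle I expect is the bookkeeping in the ${\sf Hyb}_1$ transition — making precise that conditioning on $h$ and resampling from $\mathcal{O}_{g,h}$ genuinely reproduces $\mathcal{O}_g$, and carefully isolating the $\perp$ events (both the "${\sf limit}$ reached" event, bounded by Lemma~\ref{lemma:findTranscriptProof}, and the "$\mathcal{T}_{g,h}$ undefined" case) so that they only cost an additive $\delta$ each rather than corrupting the coupling. The reprogramming step itself is essentially a black-box invocation of Lemma~\ref{lemma:repro1}, but one must be careful that the lemma requires $\mathcal{D}$ to receive the randomness of $\mathcal{B}$ afterward — here that is harmless since we are only using the lemma to bound the distinguishing gap, and $\A$ simply ignores that extra information.
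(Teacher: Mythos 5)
Your proposal is correct and follows essentially the same route as the paper: condition on a fixed $g$, pass through ${\sf Hyb}_1,{\sf Hyb}_2,{\sf Hyb}_3$ using Lemma~\ref{lemma:transition1} for the exact step and Lemmas~\ref{lemma:findTranscriptProof} and~\ref{lemma:repro1} for the lossy reprogramming step, identify ${\sf Hyb}_3(g)$ with ${\sf PRG}_{\B,G}(g)$, and average over $g$ (the paper takes a max over $g$; you are in fact slightly more careful about the $\perp$ events than the paper is). Your crude $4Q_\A\sqrt\delta=\tfrac{2}{3}{\sf Adv}$ bound overshoots, but the sharper accounting you indicate, $2\delta+2Q_\A\sqrt\delta\le \tfrac{1}{18}{\sf Adv}+\tfrac{1}{3}{\sf Adv}<\tfrac{1}{2}{\sf Adv}$, closes the gap exactly as the paper's choice of $\delta$ intends.
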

\begin{proof}
Let 
$\epsilon = \max_{g}\left\{\big|\Pr[{\sf PRG}_{\A,G}(g)=1]-\Pr[{\sf PRG}_{\B,G}(g)=1]\big|\right\}.$ Then we have 
    \begin{align*}
        \big|\Pr[{\sf PRG}_{\A,G}=1]-&\Pr[{\sf PRG}_{\B,G}=1]\big| \\
        &=\left|\sum_{g}\Pr_{\substack{s \leftarrow \bool^k\\ H \leftarrow {\sf Func}_{n,m}}}[g=G^{H}(s)]\big(\Pr[{\sf PRG}_{\A,G}(g)=1]-\Pr[{\sf PRG}_{\B,G}(g)=1]\big)\right|\\
        &\leq \epsilon \cdot \sum_{g}\Pr_{\substack{s \leftarrow \bool^k\\ H \leftarrow {\sf Func}_{n,m}}}[g=G^{H}(s)] \; = \; \epsilon.
    \end{align*}
    We now derive a bound for $\epsilon$ via the sequence of games ${\sf Hyb}_1$, ${\sf Hyb}_2$, ${\sf Hyb}_3$, discussed earlier. Let $g$ be in the range of the PRG (for at least one seed-oracle combination $s,H$). The pseudocode for the games is presented here. Note that the games only differ in how the oracle is generated. Thus, our task will be to show that in each transition between games, the quantum distinguisher $\A$ will not notice the difference between oracles. 

    \noindent\begin{minipage}{\textwidth}
  \centering
    \begin{minipage}{.33\textwidth}
    \begin{algorithm}[H]
    \renewcommand{\thealgorithm}{}
    \floatname{algorithm}{}
    \begin{algorithmic}[1]
    \State $H\leftarrow \mathcal{O}_{g}$
    \State $h={\sf findTranscript}^{H}(g,\delta)$
    \State $H_{1}\leftarrow \mathcal{O}_{g,h}$
    \State $b\leftarrow \A^{\ket{H_{1}}}(g)$
    \State return $b$
    \end{algorithmic}
      \caption{${\sf Hyb}_{1}(g)$}
    \label{alg:reprogram}
\end{algorithm} 
  \end{minipage} 
  \begin{minipage}{.4\textwidth}
     \begin{algorithm}[H]
     \renewcommand{\thealgorithm}{}
    \floatname{algorithm}{}
    \begin{algorithmic}[1]
    \State $H\leftarrow \mathcal{O}_{g}$
    \State $h={\sf findTranscript}^{H}(g,\delta)$
    \State $H_{0}\leftarrow {\sf Func}_{n,m}(h)$
    \State $\tau\leftarrow \mathcal{T}_{g,h}$
    \State $H_{1}=H_{0}^{(\tau\setminus h)}$
    \State $b\leftarrow \A^{\ket{H_{1}}}(g)$ \fbox{$b\leftarrow \A^{\ket{H_{0}}}(g)$}
    \State return $b$
    \end{algorithmic}
      \caption{${\sf Hyb}_{2}(g)$ \fbox{${\sf Hyb}_{3}(g)$}}
    \label{alg:reprogram}
\end{algorithm} 
  \end{minipage}
\end{minipage}
\vskip.8\baselineskip
\noindent 
    \medskip\noindent{\bf (${\sf PRG}_{\A,G}(g)\mapsto {\sf Hyb}_{1}(g)$):} We show that $H$ in ${\sf PRG}_{\A,G}(g)$ is distributed identically to $H_{1}$ in ${\sf Hyb}_{1}$. The only property of ${\sf findTranscript}$ that we require is that the algorithm outputs every query that it makes. Fix some oracle $H$, and let $h={\sf findTranscript}^{H}(g,\delta)$. We will show that $H$ has the same probability of being returned in each of the games. First, we have 
    \begin{align*}
         \Pr_{\sf O\leftarrow \mathcal{O}_{g}}[{\sf O}=H]&=\Pr_{{\sf O}\leftarrow \mathcal{O}_{g}}[{\sf O}\in {\sf Func}_{n,m}(h)\land {\sf O}\in {\sf Func}_{n,m}(H\setminus h)]\\
         &=\Pr_{{\sf O}\leftarrow \mathcal{O}_{g}}[{\sf O}\in {\sf Func}_{n,m}(h)]\cdot \Pr_{{\sf O}\leftarrow \mathcal{O}_{g}}[{\sf O}\in {\sf Func}_{n,m}(H\setminus h)\mid {\sf O}\in {\sf Func}_{n,m}(h)]\\
         &=\Pr_{{\sf O}\leftarrow \mathcal{O}_{g}}[{\sf O}\in {\sf Func}_{n,m}(h)]\cdot \Pr_{{\sf O}\leftarrow \mathcal{O}_{g,h}}[{\sf O}\in {\sf Func}_{n,m}(H\setminus h)]
    \end{align*}
    The left hand side of the above equation is the probability of ${\sf PRG}_{\A,G}(g)$ generating $H$ on line 1. We will now argue that the right hand side gives the probability of ${\sf Hyb}_{1}(g)$ generating $H$ on line 3. Since the partial function $h$ output by ${\sf findTranscript}$ contains all oracle values that were queried, $h$ is independent of any oracle values outside of the domain of $h$. Therefore, for any $H'\neq H$, 
    $$
        H'\in {\sf Func}_{n,m}(h)\iff h={\sf findTranscript}^{H'}(g,\delta).
    $$
    It follows that $\Pr_{{\sf O}\leftarrow \mathcal{O}_{g}}[O \in {\sf Func}_{n,m}(h)]=\Pr_{{\sf O}\leftarrow \mathcal{O}_{g}}[h={\sf findTranscript}^{{\sf{O}}}(g,\delta)]$. Therefore the right hand side of the previous calculation gives the probability of $H$ being generated on line 3 of ${\sf Hyb}_{1}(g)$.
    
    \medskip\noindent{\bf (${\sf Hyb}_{1}(g)\mapsto {\sf Hyb}_{2}(g)$):} It follows from Lemma~\ref{lemma:transition1} that $H_{1}$ on line 5 of ${\sf Hyb}_{2}(g)$ is distributed identically to $H_{1}$ on line 3 of ${\sf Hyb}_{1}(g)$. It follows that $\Pr[{\sf Hyb}_{1}(g)=1]=\Pr[{\sf Hyb}_{2}(g)=1]$.
    
    \medskip\noindent{\bf (${\sf Hyb}_{2}(g)\mapsto {\sf Hyb}_{3}(g)$):} We will map the task of distinguishing ${\sf Hyb}_2$ from ${\sf Hyb}_3$ onto the distinguishing game of Lemma~\ref{lemma:repro1}. Referring to the notation in Lemma~\ref{lemma:repro1}, let $F_{0}$ represent $H_0$ on line 3 of ${\sf Hyb}_2$ (and~${\sf Hyb}_3$). Define the randomized reprogramming algorithm $\mathcal{B}$ so that it samples $\tau\leftarrow \mathcal{T}_{g,h}$ and produces $F_1=F_{0}^{(\tau\setminus h)}$ (lines 4--5 of ${\sf Hyb}_2$ and ${\sf Hyb}_3$). Since distinguishing $F_0$ and $F_1$ (or equivalently $H_0$ and $H_1$) is equivalent to distinguishing the two games, it follows that 
    $$
        \big|\Pr[{\sf Hyb}_{2}(g)=1]-\Pr[{\sf Hyb}_{3}(g)=1]\big|\leq 2\cdot Q_{A}\cdot \sqrt{\max_{x \not \in D_{h}}\left\{\Pr_{\tau\leftarrow \mathcal{T}_{g,h}}[(x,\star) \in \tau]\right\}},
    $$
    where we recall that $D_{h}$ denotes the domain of the partial function $h$. Lemma~\ref{lemma:findTranscriptProof} tells us that with probability at least $1-\delta$ over the function $h$, the distribution $\mathcal{T}_{g,h}$ on line 4 of ${\sf Hyb}_{2}(g)$ has the property that 
    $$
         \max_{x\not \in D_{h}}\left\{\Pr_{\tau\leftarrow \mathcal{T}_{g,h}}[(x,\star) \in \tau]\right\}\leq \delta.
    $$
   It follows that ${\sf Hyb}_{2}(g)$ and ${\sf Hyb}_{3}(g)$ can be distinguished with advantage at most $(1-\delta)\cdot 2 Q_{A}\sqrt{\delta}+\delta$, where $Q_{\A}$ is the number of oracle queries made by $\A$. Therefore we have 
    $$
    \big|\Pr[{\sf Hyb}_{2}(g)=1]-\Pr[{\sf Hyb}_{3}(g)=1]\big|\leq 2\cdot Q_{\A}\cdot \sqrt{\delta}+\delta.
    $$

    \medskip\noindent{\bf (${\sf Hyb}_{3}(g)\mapsto {\sf PRG}_{\B,G}(g)$):} These two games are identical as $\B$ can generate $H_0$ on line 3 itself and is therefore able to simulate the entire computation of $A^{\ket{H_0}}(g)$ including the superposition queries. Therefore,  $\Pr[{\sf Hyb_{3}}(g)=1]=\Pr[{\sf PRG}_{\B,G}(g)=1]$. Putting everything together, we have 
    $$
        \big|\Pr[{\sf PRG}_{\A,G}(g)=1]-\Pr[{\sf PRG}_{\B,G}(g)=1]\big|\leq 2Q_{\A}\sqrt{\delta}+\delta. 
    $$
    Letting $\delta = \left(\frac{{\sf Adv}_{\A,G}^{\sf PRG}}{6\cdot Q_{\A}}\right)^{2}$ as in Algorithm~\ref{alg:B}, we obtain the bound in the lemma statement. \end{proof}

We can now prove the main result of this section.

\begin{theorem}
\label{theorem:lifting1}
    Let $G:\bool^{k}\mapsto \bool^{\ell}$ be a deterministic algorithm making $Q_{G}$ classical queries to an oracle $H:\bool^{n}\mapsto \bool^{m}$, and let $\A$ be an algorithm making $Q_{\A}$ quantum queries to~$H$. Let $\epsilon={\sf Adv}^{{\sf PRG}}_{\A,G}$.
    %the same oracle, and participating in the PRG distinguishing experiment defined in ${\sf PRG}_{\A,G}$ and ${\sf Rand}_{\A,G}$. 
  %  Then for any $\lambda$ \jnote{I did not see where $\lambda$ is defined in the proof. Is it an integer? A positive real?} there is an algorithm $\B_{\lambda}^{H}$ (Algorithm~\ref{alg:B}) that makes ${\sf poly}(k,n,Q_{G},Q_{\A},\lambda)$ classical queries to $H$ and satisfies 
%    $$
 %       {\sf Adv}^{{\sf PRG}}_{\B,G}\geq {\sf Adv}^{{\sf PRG}}_{\A,G}-\min\left\{\frac{3}{Q_{\A}},\frac{3}{\lambda}\right\}.
 %   $$
 Then there is an algorithm $\B$ making ${\sf poly}(k,n,Q_{G},Q_{\A},\epsilon^{-1})$ classical queries to~$H$ with ${\sf Adv}^{{\sf PRG}}_{\B,G} \geq \epsilon/2$.
\end{theorem}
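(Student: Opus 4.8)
The plan is to derive Theorem~\ref{theorem:lifting1} from Lemma~\ref{lemma:mainLemma} (which already controls the pseudorandom branch of the distinguishing experiment) together with an easy analysis of the random branch, and then to read off the query bound from Figure~\ref{fig:ClasssicalDist}. I may assume $\epsilon>0$. First I normalize the sign of the advantage: replacing $\A$ by the algorithm that runs $\A$ and outputs the complementary bit changes neither the number of quantum queries nor ${\sf Adv}^{\sf PRG}_{\A,G}$, and a classical distinguisher built from the complemented algorithm is equally good, so I may assume $\Pr[{\sf PRG}_{\A,G}=1]-\Pr[{\sf Rand}_{\A,G}=1]=\epsilon$. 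Let $\B$ be the algorithm of Figure~\ref{fig:ClasssicalDist} (Algorithm~\ref{alg:B}) instantiated with this $\A$, and let ${\sf PRG}_{\B,G}$, ${\sf Rand}_{\B,G}$, ${\sf Rand}_{\B,G}(g)$ denote the experiments obtained from ${\sf PRG}_{\A,G}$, ${\sf Rand}_{\A,G}$, ${\sf Rand}_{\A,G}(g)$ by running $\B^{H}$ in place of $\A^{\ket H}$.

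The key remaining claim is that $\B$ does no better than $\A$ in the random branch: $\Pr[{\sf Rand}_{\B,G}=1]\le\Pr[{\sf Rand}_{\A,G}=1]$. It suffices to prove $\Pr[{\sf Rand}_{\B,G}(g)=1]\le\Pr[{\sf Rand}_{\A,G}(g)=1]$ for each fixed $g$ and then average over a uniform $g$. If $g$ is not in the range of $G$ for any seed/oracle pair, then $\B^{H}(g)$ outputs $0$ and the inequality is immediate. Otherwise, the point is that in the random branch the oracle $H$ is a \emph{uniform} function (not one drawn from $\mathcal{O}_g$), while every query that ${\sf findTranscript}^{H}(g,\delta)$ makes is determined by $g$, $\delta$, and the answers seen so far; hence for every $h_0\neq\perp$ in the support of ${\sf findTranscript}$ we have ${\sf findTranscript}^{H}(g,\delta)=h_0 \iff H\in{\sf Func}_{n,m}(h_0)$, so conditioned on ${\sf findTranscript}^{H}(g,\delta)=h_0$ the oracle $H$ is uniform over ${\sf Func}_{n,m}(h_0)$ --- exactly the distribution from which $\B$ draws $H'$. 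Thus $(h,H')$ and $(h,H)$ are identically distributed, and
\begin{align*}
\Pr[{\sf Rand}_{\B,G}(g)=1]
&=\Pr_{H}\big[{\sf findTranscript}^{H}(g,\delta)\neq\perp \ \wedge\ \A^{\ket{H'}}(g)=1\big]\\
&=\Pr_{H}\big[{\sf findTranscript}^{H}(g,\delta)\neq\perp \ \wedge\ \A^{\ket{H}}(g)=1\big]
\ \le\ \Pr_{H}\big[\A^{\ket{H}}(g)=1\big]=\Pr[{\sf Rand}_{\A,G}(g)=1].
\end{align*}
Intuitively, the only way $\B$ can lose relative to $\A$ here is by outputting $0$ in its two exceptional cases, which can only push the advantage in the favorable direction --- and this is exactly why the sign normalization above is needed, since $\B$ is \emph{not} symmetric under output-flipping.

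Combining the two branches finishes the argument: Lemma~\ref{lemma:mainLemma} gives $\Pr[{\sf PRG}_{\B,G}=1]\ge\Pr[{\sf PRG}_{\A,G}=1]-\tfrac12{\sf Adv}^{\sf PRG}_{\A,G}$, so
\begin{align*}
{\sf Adv}^{\sf PRG}_{\B,G}
&\ \ge\ \Pr[{\sf PRG}_{\B,G}=1]-\Pr[{\sf Rand}_{\B,G}=1]\\
&\ \ge\ \Big(\Pr[{\sf PRG}_{\A,G}=1]-\tfrac{\epsilon}{2}\Big)-\Pr[{\sf Rand}_{\A,G}=1]
\ =\ \epsilon-\tfrac{\epsilon}{2}\ =\ \tfrac{\epsilon}{2}.
\end{align*}
For the query count, $\B$'s only classical queries to $H$ are those made inside ${\sf findTranscript}$, which runs for at most ${\sf limit}=\lceil -\ln(\delta)\cdot 4Q_G^2/\delta^2\rceil+1$ iterations with one query per iteration; plugging in $\delta=(\epsilon/6Q_\A)^2$ gives ${\sf limit}=O\!\big(Q_G^2 Q_\A^4\,\epsilon^{-4}\log(Q_\A/\epsilon)\big)$, which is ${\sf poly}(k,n,Q_G,Q_\A,\epsilon^{-1})$. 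I do not expect a genuine obstacle here: Lemma~\ref{lemma:mainLemma} is where essentially all the work lives, and the only delicate points are the random-branch distributional identity (that the resampled $H'$ matches $H$ conditioned on what ${\sf findTranscript}$ observed) and the accompanying sign normalization.
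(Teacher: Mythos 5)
Your proof is correct and follows the paper's route: Lemma~\ref{lemma:mainLemma} controls the pseudorandom branch, the random branch is handled directly, and the query count is read off from ${\sf findTranscript}$. Where you differ is the random branch, and your treatment is in fact more careful than the paper's. The paper simply asserts that the oracle is independent of $g$ in the random experiments and concludes that the $\A$- and $\B$-versions of ${\sf Rand}$ agree (so its $\delta_{{\sf rand}}$ vanishes), then finishes by a triangle-inequality manipulation of absolute values; it also cites Lemma~\ref{lemma:transition1} for the bound on $\delta_{{\sf PRG}}$ where it clearly means Lemma~\ref{lemma:mainLemma}. Strictly speaking the asserted equality is not literally true, because $\B$ returns $0$ whenever $g$ lies outside the range of $G$ or ${\sf findTranscript}$ returns $\perp$, so one only gets the one-sided bound $\Pr[{\sf Rand}_{\B,G}=1]\le\Pr[{\sf Rand}_{\A,G}=1]$ --- exactly what you prove via the lazy-sampling identity (conditioned on the transcript $h$, a uniform $H$ is uniform over ${\sf Func}_{n,m}(h)$, matching $\B$'s resampled $H'$). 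Your preliminary sign normalization, replacing $\A$ by its output-complement so that $\Pr[{\sf PRG}_{\A,G}=1]-\Pr[{\sf Rand}_{\A,G}=1]=\epsilon\ge 0$, is then genuinely needed for this one-sided bound to point in the favorable direction (e.g., a distinguisher that outputs $1$ precisely on strings outside the range of $G$ shows that Algorithm~\ref{alg:B} built from an un-normalized $\A$ can have advantage far below $\epsilon/2$); the paper leaves this point implicit, and your remark that $\B$ is not symmetric under output-flipping is exactly the right diagnosis. Your query count matches the ${\sf limit}$ in Algorithm~\ref{alg:findTranscript} and suffices for the stated ${\sf poly}(k,n,Q_G,Q_\A,\epsilon^{-1})$ bound.
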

\begin{proof}
Let $\A$ be a quantum distinguisher and let $\B$ be Algorithm~2. Let
$$
\delta_{{\sf PRG}} = \Pr[{\sf PRG}_{\A,G}=1]-\Pr[{\sf PRG}_{\B,G}=1], \quad \quad \text{and}\quad \quad \delta_{{\sf rand}}=\Pr[{\sf Rand}_{\B,G}=1]-\Pr[{\sf Rand}_{\A,G}=1].
$$
Then 
    \begin{align*}
        {\sf Adv}_{\B,G}^{\sf PRG}& = \left|\Pr[{\sf PRG}_{\B,G}=1]-\Pr[{\sf Rand}_{\B,G}=1]\right|\\
        & = \left|\Pr[{\sf PRG}_{\A,G}=1]-\delta_{{\sf PRG}}-\Pr[{\sf Rand}_{\A,G}=1] -\delta_{{\sf rand}}\right|\\
        &=\left|\Pr[{\sf PRG}_{\A,G}=1]-\Pr[{\sf Rand}_{\A,G}=1]-(\delta_{{\sf PRG}}+\delta_{{\sf rand}})\right|\\
        &\geq \left|\Pr[{\sf PRG}_{\A,G}=1]-\Pr[{\sf Rand}_{\A,G}=1]\right|-|\delta_{{\sf PRG}}+\delta_{{\sf rand}}| \\
        &= {\sf Adv}_{\A,G}^{\sf PRG}-\left|\delta_{{\sf PRG}}+\delta_{{\sf rand}}\right|.
    \end{align*}
    Lemma~\ref{lemma:transition1} implies $\delta_{{\sf PRG}}\leq \frac{1}{2}{\sf Adv}_{\A,G}^{\sf PRG}$. On the other hand, in ${\sf Rand}_{\B,G}$ and ${\sf Rand}_{\A,G}$ the oracle is independent of $g$ and therefore $\Pr[{\sf Rand}_{\B,G}(g)=1]=\Pr[{\sf Rand}_{\A,G}(g)=0]$. The theorem  follows.
\end{proof}

\section{Simulating Pseudo-Deterministic Algorithms Classically}
We now turn to the case where the PRG is a quantum algorithm which may make quantum queries to the oracle. In this section we show~(1) that any pseudo-deterministic oracle algorithm can only depend on a polynomial number of oracle values, and~(2) that for any pseudo-deterministic quantum oracle algorithm $\A$ making queries to an oracle $H$, there exists a computationally unbounded but query bounded classical algorithm $\B$ that makes only polynomially more (classical) queries to the same oracle $H$, and outputs the same value as $\A$ with high probability. We stress that in this section our results concern fixed (rather than random) oracles. In section 4.3 we show as a corollary that our lifting theorem for PRGs applies to PRGs making quantum queries to the random oracle.

\subsection{From $\delta$-Deterministic to Classical Algorithms}
First we prove that any $\delta$-deterministic oracle algorithm can only depend on a polynomial number of oracle values. 
\begin{lemma}
\label{lemma:oracleDependance}
Let $\A^{(\cdot)}$ be a quantum algorithm making $Q$ quantum queries to an oracle $F:\{0,1\}^{n}\mapsto \{0,1\}^{*}$. Assume that $\A^{(\cdot)}$ is $\delta$-deterministic given oracle access to any function on $n$-bit strings. Suppose that $\A$ makes $Q$ queries to $F$. Then there exists a subset $S \subset \{0,1\}^{n}$ satisfying the following properties:
\begin{enumerate}
    \item $|S|\leq \frac{Q^{4}}{(1-2\delta)^{4}}$
    \item $\A^{\ket{F}}\qeq{\delta} \A^{\ket{H}}$ for any $H$ satisfying $H|_{S}=F|_{S}$
    \item $q_{z}^{F}\geq \frac{(1-2\delta)^{4}}{Q^{3}} \quad \forall z \in S$
\end{enumerate}

\end{lemma}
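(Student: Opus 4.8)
The plan is to build $S$ greedily as the set of ``heavy'' points under repeated reprogramming. First I would observe that the Swapping lemma (Lemma~\ref{lemma:swapping}) already handles property~2 in spirit: if $H$ and $F$ agree on every point with total query magnitude above some threshold $t$, then $\norm{\ket{\phi_F}-\ket{\phi_H}}\le \sqrt{Q\cdot \sum_{x:F(x)\ne H(x)} q^F_x}$, and since each disagreeing point contributes at most $t$... but wait, that sum could still be large if there are many light points. So the naive thing fails, and the key idea must be that reprogramming a light point changes the \emph{total query magnitude profile} at other points, so one has to iterate: reprogram the current heaviest point, re-examine magnitudes in the new execution, reprogram the new heaviest, and so on. The crucial claim is that this process terminates after polynomially many steps because $\A$ is $\delta$-deterministic.

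Concretely, I would set a threshold $t = (1-2\delta)^4/Q^3$ (matching property~3) and run the following procedure starting from $F_0 = F$: if $F_i$ has some point $z$ with $q^{F_i}_z \ge t$ that is not yet in $S$, pick the one of largest magnitude, add it to $S$, and reprogram it to a fresh uniformly random value to obtain $F_{i+1}$ (keeping all previously added points fixed at their $F$-values). The termination argument is where the $\delta$-determinism is essential: because $\A$ is $\delta$-deterministic on \emph{every} oracle, on both $F_i$ and $F_{i+1}$ its output is $\delta$-concentrated, so if the two concentrated values differed, the outputs would be far apart; but the Swapping lemma bounds $\norm{\ket{\phi_{F_i}}-\ket{\phi_{F_{i+1}}}}\le \sqrt{Q\cdot q^{F_i}_z}$ only when a single point changes — and if $z$ had magnitude $\ge t$ we only get $\sqrt{Q\cdot q^{F_i}_z}$, which need not be small. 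So instead I expect the argument to go: summing $q^{F_i}_z$ over the sequence of reprogrammed points must be bounded, because each reprogramming step that changes $\A$'s concentrated output ``uses up'' a chunk of a potential/progress measure, and a step that does \emph{not} change the output means $\A$ couldn't have been depending on $z$ in a way that matters — so $z$ need not have been added. Quantitatively: the total query magnitude $\sum_x q^{F}_x = Q$ (sum over all $Q$ queries, each a unit vector), and one shows that across the whole greedy process the reprogrammed magnitudes telescope to at most roughly $Q$ times a polynomial factor, forcing at most $Q^4/(1-2\delta)^4$ heavy points since each has magnitude $\ge t = (1-2\delta)^4/Q^3$. This gives property~1, and property~3 is immediate from the selection rule.

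For property~2, once $S$ is the final set, any $H$ with $H|_S = F|_S$ differs from $F$ only on points that were never heavy in \emph{any} $F_i$; I would argue by a hybrid over the reprogramming sequence that $\A^{\ket{F}}$ and $\A^{\ket{H}}$ have output distributions within statistical (trace) distance corresponding to Euclidean distance $O(\delta)$, invoking Lemma~\ref{lemma:measure} to pass from Euclidean distance to trace distance to indistinguishability of measurement outcomes, hence $\A^{\ket F}\qeq{\delta}\A^{\ket H}$ after absorbing constants.

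The main obstacle I anticipate is the termination/counting argument: making precise why the greedy reprogramming process cannot add more than $\mathrm{poly}(Q)/(1-2\delta)^4$ points. The naive ``total magnitude is $Q$, each point has magnitude $\ge t$, so at most $Q/t$ points'' argument is too lossy in one pass and, more importantly, ignores that reprogramming \emph{redistributes} magnitude — a point that was light in $F$ can become heavy in $F_i$. The honest argument must couple the magnitude bookkeeping with the $\delta$-determinism so that each added point genuinely ``costs'' something that is globally bounded; reconciling those two accountings to land exactly at the exponents $4$ and $3$ in the statement is the delicate part, and I would expect it to use the Swapping lemma in the contrapositive form together with the fact that two $\delta$-deterministic executions with different concentrated values are $\Omega(1-2\delta)$-far in trace distance.
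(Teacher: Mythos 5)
There is a genuine gap, and you essentially flag it yourself: the termination/counting argument for your greedy reprogramming process is the heart of the lemma, and your proposal leaves it as a hoped-for ``potential/progress measure'' without saying what the potential is or why it is globally bounded. Your difficulty is structural: you select points that are heavy with respect to the \emph{current reprogrammed} oracle $F_i$, and since reprogramming redistributes query magnitude, (a) there is no bound on how many rounds can produce a newly-heavy point, and (b) your claim that ``property~3 is immediate from the selection rule'' is false in your scheme, because the lemma requires $q^{F}_z\geq (1-2\delta)^4/Q^3$ with respect to the \emph{original} oracle $F$, whereas a point heavy in some $F_i$ may have tiny magnitude in the $F$-execution. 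Your property~2 argument has the same flaw you already identified in your opening paragraph: an $H$ agreeing with $F$ on $S$ may differ from $F$ on exponentially many light points, and a hybrid/swapping bound over them gives $\sqrt{Q\cdot\sum_x q^F_x}$, which can be as large as $Q$.

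The paper's proof sidesteps all of this with a different greedy step. At each stage it looks at the \emph{minimal} partial reprogramming $R$, supported off the current $S$, such that $\A^{\ket{F}}\not\qeq{\delta}\A^{\ket{F^{(R)}}}$, and adds to $S$ the point of $D_R$ with largest magnitude \emph{in the $F$-execution}. Two applications of the swapping lemma (combined with the fact that two $\delta$-deterministic executions with different concentrated outputs are $(1-2\delta)$-far, via Lemma~\ref{lemma:measure}) give: by minimality of $R$, deleting any single point $(x_i,y_i)$ yields $\A^{\ket{F_i}}\not\qeq{\delta}\A^{\ket{F^{(R)}}}$, so every $x_i\in D_R$ has $q^{F^{(R)}}_{x_i}\geq (1-2\delta)^2/Q$, hence $|D_R|\leq Q^2/(1-2\delta)^2$; and since $F$ and $F^{(R)}$ are themselves distinguishable, $\sum_{z\in D_R}q^{F}_z\geq (1-2\delta)^2/Q$. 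Averaging yields a point with $q^F\geq (1-2\delta)^4/Q^3$, so every point placed in $S$ is heavy with respect to the fixed oracle $F$, and $|S|\leq Q^4/(1-2\delta)^4$ follows immediately from $\sum_x q^F_x\leq Q$ --- no telescoping or potential function is needed. Property~2 is then simply the termination condition: when no such $R$ supported off $S$ exists, every $H$ with $H|_S=F|_S$ is of the form $F^{(R)}$ with $D_R\cap S=\emptyset$, so $\A^{\ket{F}}\qeq{\delta}\A^{\ket{H}}$ with no hybrid over light points at all. To repair your write-up you would need to replace your per-round heaviness criterion with this ``minimal distinguishing set plus averaging'' step (or supply the missing bookkeeping argument, which the paper shows is unnecessary).
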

\begin{proof}
First we define some notation relative to a partial function $R=\{(x_i,y_i)\}_{i=1}^{k}$. Recall that $D_{R}=\{x_{i}\}_{i=1}^{k}$ denotes the domain of $R$. We define the following sequence of functions. For $i=1,...,k$, let $F_{i}=F^{(R\setminus \{(x_{i},y_{i})\})}$. Let $\ket{\phi_{i}}$ be the quantum state that results from the execution of $A^{\ket{F_{i}}}$ (excluding the final measurement). Let $\ket{\phi}$ be the quantum state output by $A^{\ket{F}}$ and let $\ket{\phi_{R}}$ be the quantum state output by $A^{\ket{F^{(R)}}}$. 

Now we describe a procedure to construct the set $S$ from the lemma statement. Let $K=\{0,1\}^{n}$ and let $S=\emptyset$. Let $R$ be the partial function with the smallest domain $D_{R}$, such that $D_{R}\subseteq K$ and $A^{\ket{F}}\not \qeq{\delta}A^{\ket{F^{(R)}}}$. Let $x_{m}$ be the element of $R$ with the largest total query magnitude $q_{x_{m}}^{F}$. Add $x_{m}$ to $S$ and remove $x_{m}$ from $K$. Repeat the above procedure until no such partial function $R$ exists. At this point return $S$.

We now prove that this set $S$ satisfies all properties from the lemma statement. First note that property (2) is just the termination condition for the above algorithm and so $S$ must satisfy this property. We also note that property $(3)$ implies property $(1)$ since the total query magnitude of $A^{\ket{F}}$ across all points is at most $Q$. Thus it remains to prove that property (3) is satisfied.

To show that each $R$ contains some $x_{m}$ such that $q_{x_{m}}^{F}\geq \frac{(1-2\delta)^{4}}{Q^{3}}$, we prove that (1)~each set $R$ is of size at most $\frac{Q^{2}}{(1-2\delta)^{2}}$, and (2)~that  $\sum_{z \in D_{R}}q_{z}^{F}\geq \frac{(1-2\delta)^{2}}{Q}$. These two claims give a lower bound on the average query magnitude across points in $D_{R}$ when executing $\A^{\ket{F}}(x)$, and therefore a lower bound on $q_{x_{m}}^{F}$.

Proceeding with (1), we will upper bound the size of $D_R$ by showing that $\A^{\ket{F^{(R)}}}$ must have queried each point in $D_R$ with a ``large'' weight. Since the total weight of all queries across all points for a $Q$ query algorithm is at most $Q$, this gives an upper bound on the size of $D_R$. Since $R$ is the partial function with the smallest domain such that $\A^{\ket{F}}\not \qeq{\delta} A^{\ket{F^{(R)}}}$, it follows that for any $i \in \{1,...,|D_R|\}$, we have $\A^{\ket{F}}\qeq{\delta}\A^{\ket{F_i}}$ (since $F_{i}$ differs from $F$ on a subset of the domain of size $|D_R|-1<|D_{R}|$), and therefore $\A^{\ket{F_{i}}}\not \qeq{\delta} \A^{\ket{F^{(R)}}}$. Therefore we have the following inequalities, where the first inequality comes from the fact that $\A^{\ket{F_{i}}}\not \qeq{\delta} \A^{\ket{F^{(R)}}}$ together with Lemma~\ref{lemma:measure}, and the second follows from the swapping Lemma:
\begin{equation}
\label{eqn:firstInequality}
(1-2\delta) \leq \norm{\ket{\phi_{R}}-\ket{\phi_{i}}}\leq \sqrt{Q\cdot q_{x_{i}}^{F^{(R)}}} \implies \frac{(1-2\delta)^{2}}{Q}\leq q_{x_{i}}^{F^{(R)}}.    
\end{equation}
In more detail, the first inequality above is due to the following. Since $\A^{\ket{F_i}}\not \qeq{\delta} \A^{\ket{F^{(R)}}}$, it follows that $\ket{\phi_i}$ and $\ket{\phi_{R}}$ can be distinguished with probability at least $1-2\delta$, and therefore the trace distance between $\ket{\phi_i}$ and $\ket{\phi_{R}}$ is at least $(1-2\delta)$. Lemma~\ref{lemma:measure} tells us that the Euclidean distance is lower bounded by the trace distance, and therefore we have $\norm{\ket{\phi_i}-\ket{\phi_{R}}}\geq (1-2\delta)$. Since the total query magnitude of a $Q$ query algorithm (and in particular of $\A^{\ket{F^{(R)}}}$) across all points is at most $Q$, it follows that $|D_R|\leq \frac{Q^{2}}{(1-2\delta)^{2}}$.

Proceeding with (2), note that $\A^{\ket{F}}\not \qeq{\delta}\A^{\ket{F^{(R)}}}$, and therefore
\begin{equation}
    (1-2\delta) \leq \norm{\ket{\phi_{R}}-\ket{\phi}}\leq \sqrt{Q\sum_{z \in D_{R}}q_{z}^{F}}\implies \frac{(1-2\delta)^{2}}{Q}\leq \sum_{z \in D_{R}}q_{z}^{F},
\end{equation}
where the above inequalities are proved in the same way as they were for Equation~\ref{eqn:firstInequality} above.
Combining the above lower bound with the upper bound on $|D_R|$ from before gives
$$
\frac{1}{|D_R|}\sum_{z \in D_{R}}q_{z}^{F}\geq \frac{(1-2\delta)^{4}}{Q^{3}}.
$$
It follows that there is some $x_{m} \in D_{R}$ such that $q_{x_{m}}^{F}\geq \frac{(1-2\delta)^{4}}{Q^{3}}$. 
\end{proof}

\subsection{The Simulation Algorithm}
We now introduce and motivate our classical-query algorithm in Figure~\ref{fig:simOracle1}. Let $\A$ be a $\delta$-deterministic oracle algorithm making $Q_{\A}$ queries to some oracle~$F$. The result in the previous section gives us a set $S_{F}$ such that the output of $\A$ only depends on the oracle values on points in $S_{F}$. Thus, our approach will be to learn $S_{F}$, query the oracle only on values in $S_{F}$, and then classically simulate the computation of $\A$ using an oracle that is consistent with $F$ on $S_{F}$. Since $\A$ only depends on values inside this set, we may define our oracle arbitrarily on all other values. 

The algorithm itself is quite simple. We start with an initially empty partial function $f_{0}$ which we shall modify as we make classical queries to the oracle. For $i$ starting at $0$ we do the following. We classically simulate the execution of $\A^{\ket{\extFunc{f_{i}}}}$ and record all points in the domain of the oracle with query magnitude exceeding $q_{\text{min}}$. We then query $F$ these points and update our partial function to be consistent with the oracle $F$ on these additional points. Let $f_{i+1}$ be the updated partial function. As we prove below, repeating this process a sufficient number of times results in a partial function whose domain contains $S_{F}$, and therefore has all of the required information to compute the most likely output of $\A^{\ket{F}}$.

\noindent\begin{minipage}{\textwidth}
  \centering
     \begin{minipage}{.34\textwidth}
\begin{algorithm}[H]
    \begin{algorithmic}[1]
    \State $S=\left\{x\mid q_{x}^{\extFunc{f}}\geq \floor*{\frac{(1-2\delta)^{4}}{Q^{3}}}\right\}$
    \For{$x \in S$}
    \If{$x \not \in D_{f}$}
    \State $f = f \cup \{(x,F(x))\}$
    \EndIf
    \EndFor 
    \State Return $f$
    \end{algorithmic}
      \caption{${\sf update}^{F}(\A,f)$}
    \label{alg:lazySamp1sub1}
\end{algorithm} 
  \end{minipage}
   \begin{minipage}{.64\textwidth}
\begin{algorithm}[H]
    \begin{algorithmic}[1]
    \State $c = 0$
    \State $y_{\text{old}}=\arg\max_{y}\left\{\Pr[y\leftarrow \A^{\ket{\extFunc{f_0}}}]\right\}$
    \State $k=\ceil*{\frac{Q^{4}}{(1-2\delta)^{4}}}$
    \While{$c\leq k$ and $y_{\text{old}}=\arg\max_{y}\left\{\Pr[y\leftarrow \A^{\ket{\extFunc{f_{c}}}}]\right\}$}
    \State $c\leftarrow c+1$
    \State $f_{c}\leftarrow {\sf update}^{F}(\A,f_{c-1})$
    \EndWhile 
    \State return $(f_c,c)$
    \end{algorithmic}
      \caption{${\sf getPoint}^{F}(\A,f_0)$}
    \label{alg:lazysamp1sub2}
\end{algorithm} 
  \end{minipage}\\
    \begin{minipage}{.5\textwidth}
    \begin{algorithm}[H]
    \begin{algorithmic}[1]
    \State $f_0=\emptyset$
    \State $k=\ceil*{\frac{Q^{4}}{(1-2\delta)^{4}}}$
    \For{$i=1,...,k$}
    \State $(f'_{i-1},c_1)\leftarrow {\sf getPoint}^{F}\left(\A,f_{i-1}\right)$
    \IIf{$c_1 = k$} return $\mathbbm{1}^{(f'_{i-1})}$ \EndIIf 
    \State $(f_i,c_2)\leftarrow {\sf getPoint}^{F}\left(\A,f'_{i-1}\right)$
    \IIf{$c_2 = k$} return $\mathbbm{1}^{(f_{i})}$ \EndIIf 
    \EndFor 
    \State return $\mathbbm{1}^{(f_{k})}$.
    \end{algorithmic}
      \caption{${\sf simOracle}^{F}(\A)$}
    \label{alg:lazySamp1}
\end{algorithm} 
  \end{minipage}
  \captionof{figure}{Classical query algorithm for simulating quantum oracle algorithms. \vspace*{8pt}}
  \label{fig:simOracle1}
\end{minipage}

We now give a brief overview of the algorithms  in Figure 4, along with a sketch of the analysis. The proof of correctness for the main simulation algorithm ${\sf simOracle}$ (Algorithm~\ref{alg:lazySamp1}) is given in Lemma~\ref{lemma:OracleSimulation}, and an analysis of the subroutine ${\sf getPoint}$ (Algorithm~\ref{alg:lazysamp1sub2}) is given in Lemma~\ref{lemma:addingPoints1}.

\begin{itemize}
    \item ${\sf update}^{F}(\A,f)$: Given classical oracle access to a function $F:\{0,1\}^{n}\mapsto \{0,1\}^{*}$, and on input a description of a quantum algorithm $\A$, and a partial function $f$ such that $F\in {\sf Func}_{n,m}(f)$, ${\sf update}$ first classically simulates the computation $\A^{\ket{\extFunc{f}}}$, and finds all $x \in \bool^{n}$ such that $q_{x}^{\extFunc{f}}\geq \floor*{\frac{(1-2\delta)^{4}}{Q^{3}}}$. The oracle $F$ is then queried on each of these points, and the partial function $f$ is extended so that $f(x)=F(x)$ for each point $x$ on which $F$ was queried. 
    \item ${\sf getPoint}^{F}(\A,f_0)$: Given classical oracle access to a function $F:\{0,1\}^{n}\mapsto \bool^{*}$, and on input a description of a quantum algorithm $\A$, and a partial function $f_0$ such that $F\in {\sf Func}_{n,m}(f_0)$, ${\sf getPoint}$ first computes the value $y_0 = \arg\max_{x}\left\{\Pr[y\leftarrow \A^{\ket{\extFunc{f_0}}}]\right\}$. Then, starting at $i=0$, ${\sf getPoint}$ updates the partial function by computing $f_{i+1}\leftarrow {\sf update}^{F}(\A,f_i)$. This process terminates either when $\ceil*{\frac{Q^{4}}{(1-2\delta)^{4}}}$ iterations of this update process have passed, or when for some iteration $i$, the quantity $\arg\max_{y}\left\{\Pr[y\leftarrow \A^{\ket{\extFunc{f_{i}}}}]\right\}$ no longer equals the initial most likely value $y_0$. In Lemma~\ref{lemma:addingPoints1}, we will prove several facts about ${\sf getPoint}$, the implications of which we discuss here. Starting with some partial function $f_{i}$, consider the following sequence of two invocations of ${\sf getPoint}$:
$$
(f_{i}',c_{1})\leftarrow {\sf getPoint}^{F}(\A,f_{i}), \quad \quad (f_{i+1},c_{2})\leftarrow {\sf getPoint}^{F}(\A,f_{i}').
$$
It turns out that if $c_{1}=\ceil*{\frac{Q^{4}}{(1-2\delta)^{4}}}$, then $\A^{\ket{\extFunc{f_{i}'}}}\qeq{\delta}\A^{\ket{F}}$, and similarly if $c_2=\ceil*{\frac{Q^{4}}{(1-2\delta)^{4}}}$, then $\A^{\ket{\extFunc{f_{i+1}}}}\qeq{\delta}\A^{\ket{F}}$. On the other hand, if $c_1,c_2<\ceil*{\frac{Q^{4}}{(1-2\delta)^{4}}}$, then there is some $x \in S_{F}$ which is in the domain of $f_{i+1}$ but not in the domain of $f_{i}$. In other words, invoking ${\sf getPoint}$ twice either returns a partial function that can be used to simulate $\A^{\ket{F}}$, or it results in learning a new element of $S_{F}$. Therefore repeating the above sequence of two invocations $k=\ceil*{\frac{Q^{4}}{(1-2\delta)^{4}}}$ times should result in the domain of $f_{k}$ containing all of $S_{F}$ as $|S_{F}|\leq k$. With this in mind, the main algorithm is as follows.
    \item ${\sf simOracle}^{F}(\A)$: Given classical oracle access to a function $F:\{0,1\}^{n}\mapsto \bool^{*}$, and on input a description of a quantum algorithm $\A$, ${\sf simOracle}$ carries out the following process up to $k=\ceil*{\frac{Q^{4}}{(1-2\delta)^{4}}}$ times. Compute two invocations of ${\sf getPoint}$ as described above. If either of $c_1$ or $c_2$ are equal to $\ceil*{\frac{Q^{4}}{(1-2\delta)^{4}}}$, then return $\extFunc{f}$, where $f$ is the corresponding partial function that was returned by ${\sf getPoint}$. Otherwise, return $\extFunc{f_{k}}$ after $k$ iterations of this process. The proof of correctness is based on the discussion in the previous paragraph. We now proceed to the formal proofs.
\end{itemize}

\begin{lemma}
\label{lemma:addingPoints1}
Let $\A$ be a $\delta$-deterministic quantum algorithm given quantum oracle access to a function $F$ and making $Q$ queries. Let $S_{F}\subseteq D_{F}$ be such that $A^{\ket{F}}\qeq{\delta}A^{\ket{H}}$ for any $H$ such that $H|_{S_{F}}=F|_{S_{F}}$. Let $f_{0}$ be some partial function such that $F \in {\sf Func}_{n,m}(f_0)$. Suppose that we run the function (described in Algorithm~\ref{alg:lazysamp1sub2})
$$
(f_{c},c)\leftarrow {\sf getPoint}^{F}\left(\A,f_{0}\right)
$$ 
and let $(f_{c},c)$ be the output. Then:
\begin{enumerate}
    \item If $\A^{\ket{F}}\not\qeq{\delta}\A^{\ket{\mathbbm{1}^{(f_{0})}}}$, then there exists some $x \in S_{F}$ such that $x$ is in the domain of $f_{c}$ but not in the domain of $f_{0}$.
    \item If $\A^{\ket{F}}\not\qeq{\delta}\A^{\ket{\mathbbm{1}^{(f_{0})}}}$, then there exists some $j<\ceil*{\frac{Q^{4}}{(1-2\delta)^{4}}}$ such that $\A^{\ket{\mathbbm{1}^{(f_{j})}}}\not \qeq{\delta}\A^{\ket{\mathbbm{1}^{(f_{0})}}}$, where $f_{j}$ refers to the state of the partial function after the $j$'th iteration of the for loop in Algorithm~\ref{alg:lazysamp1sub2}.
    \item If $c=\ceil*{\frac{Q^{4}}{(1-2\delta)^{4}}}$, then $A^{\ket{\mathbbm{1}^{(f_{0})}}}\qeq{\delta}A^{\ket{F}}$.
\end{enumerate}
\end{lemma}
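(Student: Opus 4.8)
The plan is to establish the three items in the order (3) $\to$ (2) $\to$ (1): item (2) is essentially the contrapositive of (3) restricted to the partial functions produced inside ${\sf getPoint}$'s update loop, and (1) is deduced from (2). Write $q_{\min} := (1-2\delta)^{4}/Q^{3}$ for the query-magnitude threshold used by ${\sf update}$ and $k := \ceil*{Q^{4}/(1-2\delta)^{4}}$. Since the total query magnitude of a $Q$-query algorithm over all points is at most $Q$, for \emph{any} oracle at most $Q/q_{\min} = Q^{4}/(1-2\delta)^{4} \le k$ points can have total query magnitude $\ge q_{\min}$. The only property of ${\sf update}^{F}(\A,f)$ I use is that it returns $f' = f \cup \{(x, F(x)) : q_{x}^{\extFunc{f}} \ge q_{\min}\}$, so $f' \supseteq f$, $f'$ is consistent with $F$, and $D_{f'} = D_{f}$ exactly when every $x$ with $q_{x}^{\extFunc{f}} \ge q_{\min}$ already lies in $D_{f}$. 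I also use the fact (implicit in the paper's notation) that for $\delta$-deterministic algorithms $\A^{\ket{H}} \qeq{\delta} \A^{\ket{H'}}$ iff the two instances have the same $(\ge 1-\delta)$-likely output, so $\qeq{\delta}$ is an equivalence relation on such instances.

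The engine of the proof is the following \emph{fixed-point observation}: if $F \in {\sf Func}_{n,m}(f)$ and ${\sf update}^{F}(\A,f) = f$, then $\A^{\ket{\extFunc{f}}} \qeq{\delta} \A^{\ket{F}}$. To prove it, apply Lemma~\ref{lemma:oracleDependance} to $\A$ with the oracle $\extFunc{f}$: this gives a set $S_{\extFunc{f}}$ with $q_{z}^{\extFunc{f}} \ge q_{\min}$ for every $z \in S_{\extFunc{f}}$ and with $\A^{\ket{\extFunc{f}}} \qeq{\delta} \A^{\ket{H}}$ whenever $H|_{S_{\extFunc{f}}} = \extFunc{f}|_{S_{\extFunc{f}}}$. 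Because ${\sf update}$ added no new point, every $z$ with $q_{z}^{\extFunc{f}} \ge q_{\min}$ already lies in $D_{f}$, so $S_{\extFunc{f}} \subseteq D_{f}$; since $\extFunc{f}$ and $F$ agree on $D_{f}$ they agree on $S_{\extFunc{f}}$, and taking $H = F$ gives $\A^{\ket{\extFunc{f}}} \qeq{\delta} \A^{\ket{F}}$. Contrapositively, whenever $\A^{\ket{\extFunc{f}}} \not\qeq{\delta} \A^{\ket{F}}$ the update strictly enlarges the domain, and (taking for $H=F$ a point of $S_{\extFunc{f}}$ witnessing $\extFunc{f}|_{S_{\extFunc{f}}} \ne F|_{S_{\extFunc{f}}}$) it adds some $z \notin D_{f}$ with $\extFunc{f}(z) \ne F(z)$ and $q_{z}^{\extFunc{f}} \ge q_{\min}$.

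Item (3): when the counter returned by ${\sf getPoint}$ equals $k$, the update loop never triggered its early-exit condition, so the partial functions $f_{0}, \dots, f_{k-1}$ it produced all satisfy $\A^{\ket{\extFunc{f_{j}}}} \qeq{\delta} \A^{\ket{\extFunc{f_{0}}}}$. Suppose toward a contradiction that $\A^{\ket{\extFunc{f_{0}}}} \not\qeq{\delta} \A^{\ket{F}}$; by transitivity of $\qeq{\delta}$ this forces $\A^{\ket{\extFunc{f_{j}}}} \not\qeq{\delta} \A^{\ket{F}}$ for all $j \le k-1$, so by the fixed-point observation each of the $k$ updates $f_{j}\to f_{j+1}$ strictly enlarges the domain. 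The remaining step is a counting argument showing this cannot go on for $k$ rounds: one uses Lemma~\ref{lemma:oracleDependance} (applied to $F$, which gives a critical set of size $\le k$) together with the $\le k$ bound on the number of heavy points of any single oracle to conclude that at some round $\extFunc{f_{j}}$ already agrees with $F$ on a valid critical set of $\A^{\ket{\extFunc{f_{j}}}}$, whence $\A^{\ket{\extFunc{f_{j}}}} \qeq{\delta} \A^{\ket{F}}$ — a contradiction. Item (2) is then the contrapositive viewpoint: if $\A^{\ket{F}} \not\qeq{\delta} \A^{\ket{\extFunc{f_{0}}}}$ and the $\arg\max$ never changed within the first $k-1$ iterations, we would be in exactly the situation just shown to be impossible, so there is some $j < k$ with $\A^{\ket{\extFunc{f_{j}}}} \not\qeq{\delta} \A^{\ket{\extFunc{f_{0}}}}$.

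Item (1): by item (2), under the hypothesis $\A^{\ket{F}} \not\qeq{\delta} \A^{\ket{\extFunc{f_{0}}}}$ the loop inside ${\sf getPoint}^{F}(\A,f_{0})$ exits early, so there is a last update $f_{j-1}\to f_{j}=f_{c}$ across which the $(\ge 1-\delta)$-likely output of $\A^{\ket{\extFunc{\cdot}}}$ changes. Since $\A^{\ket{F}}$ depends only on $S_{F}$ and $\A^{\ket{F}} \not\qeq{\delta} \A^{\ket{\extFunc{f_{0}}}}$, the oracle $\extFunc{f_{0}}$ must disagree with $F$ at some point of $S_{F}$. I would then track $D_{f_{i}} \cap S_{F}$ along the updates and argue that if no point of $S_{F}$ ever entered the domain then $\extFunc{f_{0}}$ and $\extFunc{f_{c}}$ would agree on all of $S_{F}$; feeding this — routed through $F$ and the fixed-point observation — back into the fact that the canonical output changed across $f_{j-1}\to f_{j}$ yields a contradiction, so some $x \in S_{F}$ lies in $D_{f_{c}}\setminus D_{f_{0}}$. \emph{The main obstacle} is the counting in item (3) and the parallel bookkeeping in item (1): Lemma~\ref{lemma:oracleDependance} only supplies a ``dependence'' statement about $F$ (and, separately, about each intermediate oracle $\extFunc{f_{j}}$), never a direct comparison between two distinct partial-function oracles, so every comparison must be funnelled through $F$ and through the varying critical sets $S_{\extFunc{f_{j}}}$ while keeping all relevant cardinalities bounded by $k$.
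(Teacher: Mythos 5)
Your ``fixed-point observation'' is sound and is genuinely part of what is needed: if ${\sf update}^{F}(\A,f)$ adds nothing, then $S_{\extFunc{f}}\subseteq D_{f}$, so $F$ agrees with $\extFunc{f}$ on $S_{\extFunc{f}}$ and Lemma~\ref{lemma:oracleDependance} gives $\A^{\ket{\extFunc{f}}}\qeq{\delta}\A^{\ket{F}}$. But the step you yourself flag as ``the main obstacle'' is exactly the step that carries the lemma, and your proposal does not close it. Your contrapositive only guarantees that, whenever $\A^{\ket{\extFunc{f_{j}}}}\not\qeq{\delta}\A^{\ket{F}}$, the update adds a point of $S_{\extFunc{f_{j}}}$ at which $\extFunc{f_{j}}$ disagrees with $F$. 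That point need not lie in $S_{F}$, and the sets $S_{\extFunc{f_{j}}}$ change from round to round, so ``strict domain growth each round'' gives no contradiction after $k=\ceil*{\frac{Q^{4}}{(1-2\delta)^{4}}}$ rounds: the total number of points where the (identity-extended) oracles can disagree with $F$ is not bounded by $k$, and nothing you have established forces any $\extFunc{f_{j}}$ to eventually agree with $F$ on a critical set. The same missing ingredient undermines your item~(1) sketch: knowing that $\extFunc{f_{0}}$ and $\extFunc{f_{c}}$ agree with \emph{each other} on $S_{F}$ is useless for invoking Lemma~\ref{lemma:oracleDependance}, which only compares oracles that agree with $F$ on $S_{F}$.

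The paper closes this gap with a hybrid-oracle argument you did not find: assuming $\A^{\ket{\extFunc{f_{i}}}}\not\qeq{\delta}\A^{\ket{F}}$, define $F'$ to equal $F$ on $S_{F}$ and $\extFunc{f_{i}}$ elsewhere. If $F$ and $\extFunc{f_{i}}$ agreed on every point of $S_{F}\cap S_{\extFunc{f_{i}}}$, then $F'|_{S_{F}}=F|_{S_{F}}$ and $F'|_{S_{\extFunc{f_{i}}}}=\extFunc{f_{i}}|_{S_{\extFunc{f_{i}}}}$, so Lemma~\ref{lemma:oracleDependance} (property~2) applied twice gives $\A^{\ket{F}}\qeq{\delta}\A^{\ket{F'}}\qeq{\delta}\A^{\ket{\extFunc{f_{i}}}}$, a contradiction. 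Hence there is a disagreement point $x\in S_{F}\cap S_{\extFunc{f_{i}}}$; by property~3 its query magnitude under $\extFunc{f_{i}}$ is at least $\frac{(1-2\delta)^{4}}{Q^{3}}$, so ${\sf update}$ queries it, and since $x\notin D_{f_{i}}$ (as $f_{i}$ is consistent with $F$) and $x\in S_{F}$, every productive round adds a \emph{new element of $S_{F}$}. Because $|S_{F}|\le k$, this is what simultaneously yields item~(1) (take $i=0$), the termination count in item~(2), and item~(3) as its contrapositive. Until your argument produces new points lying in the fixed set $S_{F}$ rather than in the round-dependent sets $S_{\extFunc{f_{j}}}$, the counting in items~(1)--(3) does not go through.
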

\begin{proof}
For notational compactness in the following calculations we let $k=\ceil*{\frac{Q^{4}}{(1-2\delta)^{4}}}$. As in the lemma statement let $i$ refer to some iteration of the while loop of Algorithm~\ref{alg:lazysamp1sub2}. We first note that item~(3) is implied by~(2). We now prove item~(1). To start, we show that for any $i \in \{0,...,k-1\}$, if $\A^{\ket{\mathbbm{1}^{(f_{i})}}}\not \qeq{\delta}\A^{\ket{F}}$, then there exists some $x \in S_{F}$ in the domain of $f_{i+1}$ which is not in the domain of $f_{i}$. In other words, we show that after another iteration of the while loop, a new element of $S_{F}$ has been learned. Assuming that $\A^{\ket{F}}\not\qeq{\delta}\A^{\ket{\extFunc{f_{i}}}}$, we first show there must exist some $x \in S_{F}\cap S_{\extFunc{f_{i}}}$ such that $F(x)\neq \extFunc{f_{i}}(x)$. If there were no such $x$, then we could define the function
\[ F'(x) = \begin{cases} 
      F(x) & x \in S_{F} \\
      \extFunc{f_{i}}(x) & \text{otherwise}. 
   \end{cases}
\]
Applying Lemma~\ref{lemma:oracleDependance} we have $A^{\ket{F'}}\qeq{\delta}\A^{\ket{F}}$ since $F'|_{S_{F}}=F|_{S_{F}}$, and $\A^{\ket{F'}}\qeq{\delta}\A^{\ket{\extFunc{f_{i}}}}$ since $\extFunc{f_{i}}|_{S_{\extFunc{f_{i}}}}=F'|_{S_{\extFunc{f_{i}}}}$, which contradicts the assumption that $\A^{\ket{F}}\not \qeq{\delta}\A^{\ket{\extFunc{f_{i}}}}$. However since $x \in S_{\extFunc{f_{i}}}$, Lemma~\ref{lemma:oracleDependance} tells us that the query magnitude of $\A^{\ket{\extFunc{f_{i}}}}$ on $x$ will exceed $\floor*{\frac{(1-2\delta)^{4}}{Q^{3}}}$, and therefore the algorithm ${\sf update}(\A,f_{i})$ will query $F$ on $x$. It follows that the domain of $f_{i+1}$ will contain $x$. This proves item~(1) since if this element $x$ is in the domain of $f_{i+1}$ it is also in the domain of $f_{c}$ since $c\geq i+1$.

To prove item~(2), suppose that $\A^{\ket{\extFunc{f_{0}}}}\qeq{\delta}...\qeq{\delta}\A^{\ket{\extFunc{f_{k-1}}}}$. It follows from the above that the domain of $f_{k-1}$ contains $k-1$ values in $S_{F}$. We also have that $\A^{\ket{\extFunc{f_{k-1}}}}\not \qeq{\delta}\A^{\ket{F}}$ since one of the assumptions in the lemma statement is that $\A^{\ket{\extFunc{f_{0}}}}\not \qeq{\delta}\A^{\ket{F}}$. It follows from the above that the domain of $f_{k}$ contains $k$ elements in $S_{F}$, but since $|S_{F}|\leq k$, the domain of $f_{k}$ contains all of $S_{F}$. This, together with Lemma $\ref{lemma:oracleDependance}$ implies that $\A^{\ket{\extFunc{f_{k}}}}\qeq{\delta}\A^{\ket{F}}$ which proves item~(2).   
\end{proof}

We are now ready to prove the main result below. 
\begin{lemma}[Oracle simulation]
\label{lemma:OracleSimulation}
Let $\A$ be a $\delta$-deterministic quantum algorithm making $Q$ queries to an oracle $F:\{0,1\}^{n}\mapsto \{0,1\}^{*}$. Then Algorithm~\ref{alg:lazySamp1} makes at most $\frac{2Q^{12}}{(1-2\delta)^{12}}$ classical queries to $F$ and returns a function $H:\{0,1\}^{n}\mapsto \{0,1\}^{*}$ such that $\A^{\ket{F}}\qeq{\delta}\A^{\ket{H}}$.
\end{lemma}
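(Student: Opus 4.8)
The plan is to show that Algorithm~\ref{alg:lazySamp1} terminates within the claimed query budget and always outputs a function $H$ that is $\delta$-indistinguishable from $F$ for $\A$. I would begin by invoking Lemma~\ref{lemma:oracleDependance} to fix, for the actual oracle $F$, a set $S_F \subseteq \{0,1\}^n$ with $|S_F| \leq k := \ceil*{\frac{Q^4}{(1-2\delta)^4}}$ such that $\A^{\ket{F}}\qeq{\delta}\A^{\ket{H}}$ whenever $H|_{S_F} = F|_{S_F}$. The entire analysis of \textsf{simOracle} is then a bookkeeping argument tracking how many elements of $S_F$ have been pinned down in the domain of the current partial function, using Lemma~\ref{lemma:addingPoints1} as the workhorse for each \textsf{getPoint} call.

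First I would argue correctness of the returned function. There are three ways \textsf{simOracle} can return: (i) some inner \textsf{getPoint} call reports $c_1 = k$ or $c_2 = k$, in which case item~(3) of Lemma~\ref{lemma:addingPoints1} gives directly that $\A^{\ket{\extFunc{f}}}\qeq{\delta}\A^{\ket{F}}$ for the returned $f$; or (ii) the loop runs all $k$ iterations and returns $\extFunc{f_k}$. For case~(ii) I would show that in every iteration where neither $c_1$ nor $c_2$ equals $k$, we are in the hypothesis of items~(1)--(2) of Lemma~\ref{lemma:addingPoints1} — i.e.\ $\A^{\ket{F}}\not\qeq{\delta}\A^{\ket{\extFunc{f_{i-1}}}}$ must hold (otherwise $f_{i-1}$ is already good and a subtlety arises, so I'd note that once $\A^{\ket{F}}\qeq{\delta}\A^{\ket{\extFunc{f_{i-1}}}}$ holds it continues to hold as the domain only grows, and in that case the first \textsf{getPoint} would have run to $c_1 = k$ and returned). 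Hence each of the two \textsf{getPoint} calls in that iteration adds a fresh element of $S_F$ to the domain (item~(1), applied to $f_{i-1}\to f'_{i-1}$ and then to $f'_{i-1}\to f_i$). So after $j$ full iterations with no early return, $\lvert D_{f_j} \cap S_F\rvert \geq 2j$; once $2j > k \geq |S_F|$ this is impossible, so an early return must have occurred before iteration $\lceil k/2\rceil$ — hence in fact case~(ii) is reached only when $D_{f_k} \supseteq S_F$, and then Lemma~\ref{lemma:oracleDependance}(2) gives $\A^{\ket{\extFunc{f_k}}}\qeq{\delta}\A^{\ket{F}}$.

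Next I would bound the query count. Each call to \textsf{update} simulates a $Q$-query algorithm $\A^{\ket{\extFunc{f}}}$ and queries $F$ on every point whose total query magnitude exceeds $\floor*{\frac{(1-2\delta)^4}{Q^3}}$; since the total query magnitude summed over all points is at most $Q$, there are at most $Q / \floor*{\frac{(1-2\delta)^4}{Q^3}} = O\!\left(\frac{Q^4}{(1-2\delta)^4}\right)$ such points, so one \textsf{update} makes $O(k)$ queries. Each \textsf{getPoint} runs its while-loop at most $k+1$ times, hence makes $O(k^2)$ queries. \textsf{simOracle} makes at most $2k$ \textsf{getPoint} calls, for a total of $O(k^3) = O\!\left(\frac{Q^{12}}{(1-2\delta)^{12}}\right)$ queries; I would then check the constants carefully to land at the stated $\frac{2Q^{12}}{(1-2\delta)^{12}}$.

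The main obstacle I anticipate is the interaction between the ``most-likely-output'' stopping rule inside \textsf{getPoint} and the $S_F$-counting argument: Lemma~\ref{lemma:addingPoints1} is stated in terms of when $\A^{\ket{\extFunc{f_0}}}\not\qeq{\delta}\A^{\ket{F}}$, but \textsf{getPoint} actually stops when $\arg\max_y \Pr[y\leftarrow\A^{\ket{\extFunc{f_c}}}]$ changes, which is only a proxy for that event. I would need to reconcile these: show that as long as the $\arg\max$ has not changed and $c < k$, the partial function is still ``in progress'' in a way compatible with the lemma's guarantees, and that the $\arg\max$ \emph{must} change (or $c$ hit $k$) before too many iterations — this is exactly what item~(2) of Lemma~\ref{lemma:addingPoints1} is for, but stitching it together across the two-\textsf{getPoint}-call structure of \textsf{simOracle} without an off-by-one error in the $S_F$ counter is the delicate part. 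Everything else is routine counting.
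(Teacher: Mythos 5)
Your overall skeleton matches the paper's proof (early return handled via item~(3) of Lemma~\ref{lemma:addingPoints1}, a fresh-element-of-$S_F$ counting argument for the full-iteration case, and the same $2k^{3}$-style query accounting), but there is a genuine gap exactly at the point you yourself flag as delicate: the iteration that begins with a partial function $f_{i}$ for which $\A^{\ket{F}}\qeq{\delta}\A^{\ket{\extFunc{f_{i}}}}$ already holds. You dispose of this case by asserting that the equivalence ``continues to hold as the domain only grows,'' so that the first ${\sf getPoint}$ call would necessarily run to $c_{1}=k$ and cause an early return. That monotonicity claim is unjustified: $\A^{\ket{F}}\qeq{\delta}\A^{\ket{\extFunc{f_{i}}}}$ can hold \emph{accidentally}, i.e.\ without $\extFunc{f_{i}}$ agreeing with $F$ on $S_{F}$, and ${\sf update}$ then overwrites identity values with $F$-values at further points, which can change the most likely output of $\A$; Lemma~\ref{lemma:oracleDependance} only protects the output once the current function agrees with $F$ on all of $S_{F}$, which is precisely what has not yet been established. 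For the same reason your claim that \emph{both} ${\sf getPoint}$ calls in a non-early-return iteration add a fresh element of $S_{F}$ (giving $2j$ elements after $j$ iterations) is not justified for the second call: item~(1) requires $\A^{\ket{F}}\not\qeq{\delta}\A^{\ket{\extFunc{f'_{i}}}}$, and the fact that the argmax changed relative to $f_{i}$ does not exclude that it changed \emph{to} the most likely output of $\A^{\ket{F}}$.

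The paper closes this case differently, and this is the whole reason ${\sf simOracle}$ invokes ${\sf getPoint}$ twice per iteration: if $\A^{\ket{\extFunc{f_{i}}}}\qeq{\delta}\A^{\ket{F}}$ and the first call nonetheless stops with $c_{1}<k$, then its stopping rule says the most likely output of $\A^{\ket{\extFunc{f'_{i}}}}$ differs from that of $\A^{\ket{\extFunc{f_{i}}}}$, hence from that of $\A^{\ket{F}}$, so $\A^{\ket{\extFunc{f'_{i}}}}\not\qeq{\delta}\A^{\ket{F}}$ and item~(1) applies to the \emph{second} call. This yields at least one (not two) fresh element of $S_{F}$ per completed iteration, which suffices because $|S_{F}|\leq \ceil*{\frac{Q^{4}}{(1-2\delta)^{4}}}$ and the outer loop runs that many iterations, so the final $f_{k}$ covers $S_{F}$ with correct values and Lemma~\ref{lemma:oracleDependance} finishes the argument. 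Your query-complexity paragraph is fine and matches the paper. To repair the proof, drop the persistence claim and the $2j$ count and substitute this two-call case analysis.
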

\begin{proof}
Each execution of ${\sf getPoint}$ calls ${\sf update}$ at most $\ceil*{\frac{Q^{4}}{(1-2\delta)^{4}}}$ times and ${\sf getPoint}$ is called at most $2\ceil*{\frac{Q^{4}}{(1-2\delta)^{4}}}$ times. Each execution of ${\sf update}$ queries $F$ at most $\ceil*{\frac{Q^{4}}{(1-2\delta)^{4}}}$ times which gives the stated query complexity.

Now we show correctness of ${\sf simOracle}$ (Algorithm~\ref{alg:lazySamp1}). We consider two ways that Algorithm~\ref{alg:lazySamp1} can terminate. The first case, is that the algorithm terminates on lines~5 or~7 (we consider these two as a single case), and the second case is that the algorithm terminates on line~9. We show that in either case, the function $\extFunc{f}$ that is returned satisfies $\A^{\ket{\extFunc{f}}}\qeq{\delta}\A^{\ket{F}}$. Proceeding with the first case, suppose that $c_{b}=\ceil{\frac{Q^{4}}{(1-2\delta)^{4}}}$ for some $b \in \{0,1\}$ as is required to terminate on lines 5 or 7. It follows from Lemma~\ref{lemma:addingPoints1} (item~3) that in this case the function $f_{c}$ output by ${\sf getPoint}$ was such that~$\A^{\ket{\extFunc{f_{c}}}}\qeq{\delta}\A^{\ket{F}}$.

Now we consider the case that the algorithm terminates on line~9. In keeping with the notation used in Algorithm~\ref{alg:lazySamp1}, for the $i$'th iteration of the for loop, let $f_{i}$ be the partial function prior to the first invocation of ${\sf getPoint}$, let $f'_{i}$ be the partial function returned by the first invocation of ${\sf getPoint}$, and let $f_{i+1}$ be the partial function returned by the second invocation of ${\sf getPoint}$. We will prove that there is some $x \in S_{F}$ such that the domain of $f_{i+1}$ contains $x$ but the domain of $f_{i}$ does not contain $x$. It will then follow that after $k=\ceil*{\frac{Q^{4}}{(1-2\delta)^{4}}}$ iterations of the for loop, the domain of $f_{k}$ will contain all of $S_{F}$ since $|S_{F}|\leq k$. Therefore by Lemma~\ref{lemma:oracleDependance}, it follows that $\A^{\ket{\extFunc{f_{k}}}}\qeq{\delta}\A^{\ket{F}}$.

First suppose that $\A^{\ket{\mathbbm{1}^{(f_{i})}}}\not \qeq{\delta}\A^{\ket{F}}$. Then it follows from Lemma~\ref{lemma:addingPoints1} (item~3) that the domain of $f'_{i}$ contains an element of $S_{F}$ which is not in the domain of~$f_{i}$. It follows that $f_{i+1}$ also contains this point. Now suppose on the other hand that $\A^{\ket{\mathbbm{1}^{(f_{i})}}}\qeq{\delta}\A^{\ket{F}}$. Since by assumption $c<\ceil*{\frac{Q^{4}}{(1-2\delta)^{4}}}$, the subroutine ${\sf getPoint}$ must have terminated because $\A^{\ket{\mathbbm{1}^{(f_{i})}}}\not \qeq{\delta}\A^{\ket{\mathbbm{1}^{(f'_{i})}}}$. It follows that $\A^{\ket{\mathbbm{1}^{(f'_{i})}}}\not \qeq{\delta}\A^{\ket{F}}$. It follows once again from Lemma~\ref{lemma:addingPoints1} (item~3) that the second invocation of ${\sf getPoint}$ will result in the domain of $f_{i+1}$ containing some element of $S_{F}$ which is not contained in $f'_{i}$, and therefore was not contained in~$f_{i}$. 
\end{proof}

\subsection{Lifting Theorem for PRGs Making Quantum Queries}
We can apply the above results to show that our PRG lifting theorem applies even if the PRG itself makes quantum queries to the random oracle. For a $\delta$-deterministic PRG $G_{1}^{\ket{H}}$ making quantum queries to its oracle, we can apply Lemma~\ref{lemma:OracleSimulation} to construct a PRG $G_{2}^{H}$ making a polynomially related number of classical queries to the same oracle and having (almost) the same output as~$G_{1}$. It is a straightforward exercise (Lemma~\ref{lemma:classical_quantum_prg_advantage}) to show that for any quantum distinguisher $\mathcal{A}_{1}$ against $G_{1}$, there is a quantum distinguisher $\mathcal{A}_{2}$ against $G_{2}$ with almost the same advantage. As the latter distinguisher is interacting with a PRG making classical queries, our lifting theorem applies and we can construct a classical distinguisher $\B$ against $G_{2}$ with advantage almost as large as that of~$\mathcal{A}_{2}$. 
\begin{lemma}
\label{lemma:classical_quantum_prg_advantage}
    Let $G_{1}^{\ket{H}}$ be a $\delta$-deterministic PRG making quantum queries to a random oracle. Let $G_{2}^{H}$ be a PRG making classical queries to a random oracle such that $\forall s$,
    $$
        G_{1}^{\ket{H}}(s)\qeq{\delta} G_{2}^{H}(s). 
    $$
    Let $A_{1}^{\ket{H}}$ be a distinguisher with quantum oracle access to $H$ with distinguishing advantage against $G_{1}$ given by ${\sf Adv}_{\ket{\mathcal{A}_{1}},G_1}^{\sf PRG}$. Then there exists $\mathcal{A}_{2}^{\ket{H}}$ such that 
    $$
        {\sf Adv}_{\A_2,G_{2}}^{\sf PRG}\geq {\sf Adv}_{\A_{1},G_{1}}^{\sf PRG}-\delta.
    $$
\end{lemma}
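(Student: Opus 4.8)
The plan is to set $\A_2:=\A_1$ and show this already suffices. The ``random'' term in the distinguishing advantage uses a challenge $g$ drawn uniformly from $\bool^\ell$ independently of everything else, so it is literally the same quantity whether we speak of $G_1$ or $G_2$; the whole argument therefore reduces to bounding how much $\A_1$'s acceptance probability can change when the challenge $G_1^{\ket H}(s)$ is replaced by $G_2^H(s)$.

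First I would unpack the hypothesis $G_1^{\ket H}(s)\qeq{\delta}G_2^H(s)$. The algorithm that runs $G_1(\cdot,s)$ is $\delta$-deterministic for every oracle, so for each fixed pair $(s,H)$ there is a single value output by $G_1^{\ket H}(s)$ with probability at least $1-\delta$; the relation $\qeq{\delta}$ says this is also the value output by $G_2^H(s)$ with probability at least $1-\delta$. A short case analysis over an arbitrary event then gives that the output distributions of $G_1^{\ket H}(s)$ and $G_2^H(s)$ are within statistical distance $\delta$. (If one reads $\qeq{\delta}$ directly as a statistical-distance bound, this step is immediate.) Consequently, for each fixed $(s,H)$, because $g\mapsto\Pr[\A_1^{\ket H}(g)=1]$ is a $[0,1]$-valued function of the classical challenge string, we get $\big|\Pr[\A_1^{\ket H}(G_1^{\ket H}(s))=1]-\Pr[\A_1^{\ket H}(G_2^H(s))=1]\big|\le\delta$, and this survives averaging over $s\leftarrow\bool^k$ and $H\leftarrow{\sf Func}_{n,m}$.

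Finally I would write out ${\sf Adv}^{\sf PRG}_{\A_2,G_2}$ and ${\sf Adv}^{\sf PRG}_{\A_1,G_1}$ explicitly, cancel the common ``random'' term, and apply the triangle inequality to the ``PRG'' terms using the bound from the previous paragraph, yielding ${\sf Adv}^{\sf PRG}_{\A_2,G_2}\ge{\sf Adv}^{\sf PRG}_{\A_1,G_1}-\delta$. I do not expect a genuine obstacle: essentially all of the content is the one-line statistical-distance bound, and the only points requiring care are (i)~explicitly invoking the $\delta$-determinism of $G_1(\cdot,s)$ so that $\qeq{\delta}$ carries its intended meaning, and (ii)~noting that the uniform challenge in the random case is truly independent of the oracle, so the two ``random'' terms match exactly. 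This is the ``straightforward exercise'' promised in the text; the real work sits in Lemma~\ref{lemma:OracleSimulation}, which supplies the $G_2$ meeting the hypothesis.
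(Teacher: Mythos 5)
Your proposal matches the paper's proof: the paper also takes $\A_2$ to be $\A_1$ run unchanged, observes that the ${\sf Rand}$ experiments are identical since $g$ and $H$ are distributed the same, and bounds the ${\sf PRG}$ terms by noting the challenge distributions of $G_1^{\ket H}(s)$ and $G_2^H(s)$ are within statistical distance $\delta$. Your pointwise-over-$(s,H)$ averaging is just a slightly more explicit rendering of the same argument, so this is correct and essentially identical to the paper.
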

\begin{proof}
    On input $g$, $\A_{2}(g)$ simply runs $\A_{1}(g)$ and returns the resulting output. First note that 
    $$
    \Pr[{\sf Rand}_{\A_{1},G_{1}}=1]=\Pr[{\sf Rand}_{\A_{2},G_2}=1]
    $$ 
    since $H$ and $g$ are distributed identically in each case. In the case of ${\sf PRG}_{\A_{1},G_1}$ and ${\sf PRG}_{\A_{2},G_2}$, $H$ is distributed identically in each case, but $g$ may not be due to the fact that $G_1$ is only $\delta$-deterministic. It follows from the definition that the two distributions on $g$ are at most $\delta$ apart, and therefore the distributions on the outputs of $\A_1$ in each case are also only a distance of $\delta$ apart. Therefore 
    $$
        \left|\Pr[{\sf PRG}_{\A_{1},G_1}=1]-\Pr[{\sf PRG}_{\A_{2},G_2}=1]\right|\leq \delta.
    $$
    The lemma follows.
\end{proof}

\begin{lemma}
Let $G^{\ket{H}}:\bool^{k}\mapsto \bool^{\ell}$ be a $\delta$-deterministic PRG making $Q_{G}$ quantum queries to a random oracle $H:\bool^{n}\mapsto \bool^{m}$, and let $\A$ be a quantum distinguisher making $Q_{\A}$ quantum queries to the same oracle. Let $\epsilon = {\sf Adv}_{\A,G}^{\sf PRG}$. Then there exists a classical query algorithm $\B^{H}$ making ${\sf poly}(k,n,Q_{\A},Q_{G},\epsilon^{-1}$) classical queries to the random oracle such that
    $$
        {\sf Adv}^{{\sf PRG}}_{\B,G}\geq \frac{\epsilon}{2}-\delta.
    $$    
\end{lemma}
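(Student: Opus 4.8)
The plan is to combine the two main technical ingredients developed in the preceding sections: the oracle-simulation result for pseudo-deterministic algorithms (Lemma~\ref{lemma:OracleSimulation}) and the classical lifting theorem for PRGs making classical queries (Theorem~\ref{theorem:lifting1}), glued together by the advantage-transfer statement (Lemma~\ref{lemma:classical_quantum_prg_advantage}). First I would use Lemma~\ref{lemma:OracleSimulation} to replace the quantum-query PRG $G=G_1$ with a classical-query PRG $G_2$: on input $s$, the algorithm $G_2^{H}(s)$ runs ${\sf simOracle}^{H}(G^{\ket{\cdot}}(s))$ to obtain a function $H'$ consistent with $H$ on the ``critical'' points, and then classically simulates $G^{\ket{H'}}(s)$ to produce its output. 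By Lemma~\ref{lemma:OracleSimulation}, $G_2^{H}(s)$ makes only ${\sf poly}(Q_G,(1-2\delta)^{-1})$ classical queries to $H$, and $G_1^{\ket{H}}(s)\qeq{\delta}G_2^{H}(s)$ for every $s$ (here we invoke the $\delta$-deterministic hypothesis, which guarantees both that ${\sf simOracle}$ applies and that the simulated output matches $G_1$'s most-likely output up to probability~$\delta$). Note $G_2$ is computationally unbounded but query-bounded, which is fine for our purposes.

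Next I would transfer the distinguishing advantage from $G_1$ to $G_2$. By Lemma~\ref{lemma:classical_quantum_prg_advantage}, for the given quantum distinguisher $\A=\A_1$ against $G_1$ with advantage $\epsilon$, the distinguisher $\A_2$ that simply runs $\A_1$ achieves ${\sf Adv}^{\sf PRG}_{\A_2,G_2}\geq \epsilon-\delta$, while still making only $Q_\A$ quantum queries to $H$. Now $G_2$ is a deterministic classical-query PRG (strictly speaking it produces a fixed output for each $(s,H)$, since ${\sf simOracle}$ and the subsequent simulation are deterministic given the oracle answers), so Theorem~\ref{theorem:lifting1} applies directly: there is a classical-query distinguisher $\B$ against $G_2$ with ${\sf Adv}^{\sf PRG}_{\B,G_2}\geq (\epsilon-\delta)/2$, making ${\sf poly}(k,n,Q_{G_2},Q_\A,(\epsilon-\delta)^{-1})$ classical queries to $H$, where $Q_{G_2}={\sf poly}(Q_G,(1-2\delta)^{-1})$ is the query count of $G_2$.

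Finally I would observe that a classical distinguisher against $G_2$ is \emph{almost} a classical distinguisher against $G$, but not quite, because $\B$ is defined relative to $G_2$, not $G$, and its advantage is measured against the distribution of $G_2$'s output. To close this gap I would note that since $G_1^{\ket{H}}(s)\qeq{\delta}G_2^H(s)$ for all $s$, the distribution of $(H,g)$ in the ``PRG'' experiment is within statistical distance $\delta$ whether $g$ is drawn from $G_1$ or $G_2$ (the ``Rand'' experiments are literally identical since $g$ is uniform and independent of $H$ in both). Hence ${\sf Adv}^{\sf PRG}_{\B,G}\geq {\sf Adv}^{\sf PRG}_{\B,G_2}-\delta\geq (\epsilon-\delta)/2-\delta$. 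The composed query bound is ${\sf poly}(k,n,Q_\A,Q_G,\epsilon^{-1})$ as long as $\delta$ is, say, a fixed constant bounded away from $1/2$ (the natural reading, since for a ``good'' PRG construction one would take $\delta$ negligible); under that reading the $-\delta$ and $-2\delta$ error terms are absorbed and the stated bound ${\sf Adv}^{\sf PRG}_{\B,G}\geq \epsilon/2-\delta$ is what comes out of the arithmetic up to relabeling the negligible slack.

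The main obstacle I anticipate is bookkeeping rather than a genuine mathematical difficulty: one must be careful that ${\sf simOracle}$ is run \emph{for each seed $s$ independently} inside $G_2$ (so the ``critical set'' can depend on $s$), that this keeps $G_2$'s total query count polynomial, and that $G_2$ is genuinely a deterministic function of $(s,H)$ so that Theorem~\ref{theorem:lifting1}'s hypotheses are met verbatim. A secondary subtlety is correctly propagating the two sources of error — the $\delta$ slack from pseudo-determinism (incurred twice, once in Lemma~\ref{lemma:classical_quantum_prg_advantage} and once in the final $G_2\to G$ comparison) and the $1/2$ factor from Theorem~\ref{theorem:lifting1} — and confirming that the final inverse-advantage parameter fed into Theorem~\ref{theorem:lifting1} is $\Theta(\epsilon-\delta)$, which stays polynomially bounded whenever $\epsilon$ is non-negligible and $\delta$ is sufficiently small.
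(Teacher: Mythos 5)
Your proposal is correct and follows essentially the same route as the paper: construct a classical-query PRG $G_2$ from Lemma~\ref{lemma:OracleSimulation}, transfer the quantum distinguisher's advantage via Lemma~\ref{lemma:classical_quantum_prg_advantage}, and then apply Theorem~\ref{theorem:lifting1} to $\A_2$ against $G_2$. Your final translation of $\B$'s advantage from $G_2$ back to $G$ (costing a further $\delta$ and yielding $\epsilon/2-3\delta/2$) is in fact handled more explicitly than in the paper, which stops at ${\sf Adv}^{\sf PRG}_{\B,G_2}\geq(\epsilon-\delta)/2$; the resulting constant in front of the (negligible) $\delta$ term is immaterial.
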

\begin{proof}
    Let $G_2^{H}$ be a classical-query algorithm that on input $s$ outputs $\arg\max_{y}\{\Pr[y\leftarrow G^{\ket{H}}(s)]\}$ and makes $Q_{G_{2}}={\sf poly}(Q_{G})$ queries to $H$. Such an algorithm exists by Lemma~\ref{lemma:OracleSimulation}. It follows from Lemma~\ref{lemma:classical_quantum_prg_advantage} that there exists a quantum distinguisher $\A_2^{\ket{H}}$ such that 
    $$
        {\sf Adv}_{\A_{2},G_{2}}^{\sf PRG}\geq {\sf Adv}_{\A,G}^{\sf PRG}-\delta.
    $$
    Applying our lifting theorem for classical PRGs to the quantum distinguisher $\A_2$, it follows that there exists some classical distinguisher $\B^{H}$ making ${\sf poly}(k,n,Q_{\A_2},Q_{G},\epsilon^{-1})$ such that 
    $$
        {\sf Adv}_{\B,G_2}^{\sf PRG}\geq \frac{1}{2}{\sf Adv}_{\A_{2},G_2}^{\sf PRG}.
    $$
    Putting the above two inequalities together we obtain the lemma.
\end{proof}

 \bibliographystyle{plain}
\bibliography{main}

\end{document}